\newtheorem{proposition}{\bf Proposition}
\newtheorem{example}{\bf Example}
\newtheorem{remark}{\bf Remark}
\newenvironment{proof}{
\begin{trivlist}
\item[\hspace{\labelsep}{\bf\noindent Proof. }] }{\par\hfill\end{trivlist}
\par}
\title{\LARGE \bf 
Analysis of a growth model inspired by Gompertz and Korf laws, 
and an analogous birth-death process}
\author{
{\sc Antonio Di Crescenzo}\footnote{
Dipartimento di Matematica, Universit\`a degli Studi di Salerno,
Via Giovanni Paolo II, 132, 84084 Fisciano (SA), Italy, email: adicrescenzo@unisa.it}
\and 
{\sc Serena Spina}\footnote{
Dipartimento di Matematica, Universit\`a degli Studi di Salerno,
Via Giovanni Paolo II, 132, 84084 Fisciano (SA), Italy, email: sspina@unisa.it}
}
\date{\normalsize 
\bf First published in {\em Mathematical Biosciences}, Vol.\ 282, p.\ 121--134\\
\copyright\ 2016 by Elsevier
}
\begin{document}

\maketitle

\begin{abstract}
We propose a new deterministic growth model which captures certain features of both the 
Gompertz and Korf laws. We investigate its main properties, with special attention to the 
correction factor, the relative growth rate, the inflection point, the maximum specific growth 
rate, the lag time and the threshold crossing problem. Some data analytic examples and their performance are also 
considered. Furthermore, we  study a stochastic counterpart of the proposed model, that is a 
linear time-inhomogeneous birth-death process whose mean behaves as the deterministic one. 
We obtain the transition probabilities, the moments and the population ultimate extinction 
probability for this process. We finally treat the special case of a simple birth process, 
which better mimics the proposed growth model.

\medskip\noindent
{\em Keywords:}  Growth model; Relative growth rate; Inflection point; 
Birth-death process; First-passage-time problem; Ultimate extinction probability.

\medskip\noindent
{\em Maths Subject Classification:}   92D25; 60J80; 60J85.
\end{abstract}
   
%
\section{Introduction and background}
\label{intro}
%
Constructing mathematical models of real evolutionary phenomena is relevant in many fields. 
It is well known that the exponential curve is a basic model for the description of growth 
without constrains. It is solution of the differential equation
\begin{equation}\label{exp_gr}
\frac{dN(t)}{dt}=r N(t),\qquad t>0,
\end{equation}
where $N(t)$ represents the population size and $r> 0$ is the growth rate. However, limitations of the 
availability of nutrients and habitat, for example, make the exponential curve not appropriate for the 
description of long-term growth. Alternative formulations of growth models, mainly involving regulatory 
effects, have been proposed in the past to take into account that the population growth slows down 
when the resources run out.
\par
One of the first generalizations of (\ref{exp_gr}) is the logistic equation 
\begin{equation}\label{logistic}
\frac{dN(t)}{dt}=r N(t)\left[1- \frac{N(t)}{C}\right],\qquad t>0,
\end{equation}
where $C>0$ represents the so called \lq\lq carrying capacity\rq\rq, which is the  maximum population 
size that the environment can sustain indefinitely. Several investigations contributed to the development 
of growth curves to describe specific behavior of the population dynamics, such as the logistic model 
and its generalizations (see, for example \cite{Bhowmick}, \cite{Talkington}, \cite{Tsoularis}, \cite{Zwietering}).
\par
Specifically, in \cite{Tsoularis} a generalized form of the logistic growth has been introduced through the 
following equation:
\begin{equation}\label{log_gen}
\frac{dN(t)}{dt}=r N(t)^a \left[1-\left(\frac{N(t)}{C}\right)^b\right]^c,\qquad t>0,
\end{equation}
where $a$, $b$, $c$, $r$ are positive constants, $N(t)$ represents the populations size, and $C>0$ is still  
the carrying capacity. Various well-known models arise as special cases of (\ref{log_gen}), such as the  
exponential, monomolecular, power, generalized von Bertalanffy, specialized von Bertalanffy, Richards, 
Smith, Blumberg, hyperbolic, generic, Schnute models (for details, see \cite{Tsoularis}). Furthermore, in 
various settings the constant growth rate $r$ of equation (\ref{exp_gr}) is substituted by a time-dependent 
rate. When $r$ is replaced by a decreasing exponential function we obtain the Gompertz model of 
population growth (see \cite{Gompertz}), governed by equation 
\begin{equation}\label{gompertz}
\frac{dN_G(t)}{dt}=\alpha e^{-\beta t} N_G(t),\qquad t>0.
\end{equation}
Another suitable choice leads to the Korf type model (see \cite{Korf}), described by
\begin{equation}\label{korf}
\frac{dN_K(t)}{dt}=\alpha t^{-(\beta+1)}N_K(t),\qquad t>0.
\end{equation}
Note that both Gompertz and Korf models can be obtained also by means of special choices of 
the parameters in equation (\ref{log_gen}), as shown in \cite{Tsoularis}.
\par
Another general model for biological growth was proposed by Koya and Goshu \cite{Koya} aiming 
to generalize the most common growth models such as Brody, Von Bertalanffy, Richards, Weibull, 
monomolecular, Mitscherlich, Gompertz, logistic models, among others. In this case, the generalized 
growth function is defined as 
\begin{equation}\label{Koya-Goshu}
N(t)=A_L+(A-A_L)\left\{1-Be^{-k[(t-\mu)/\delta]}\right\}^m,\qquad t>0,
\end{equation}
where $k>0$, $A_L$ is the lower asymptote of $N(t)$, $A=\lim_{t\rightarrow \infty} N(t)$, $\mu$ is the time shift 
(a constant), $\delta$ is the time scale (a constant), $m$ is a shape parameter of the growth 
function ($m\neq 0$), and $B=1-\left(\frac{A_\mu-A_L}{A-A_L}\right)^{1/m}$, where 
$A_\mu\equiv N(\mu)$ is the growth rate parameter. 
\par
More recently, in order to model biological dynamics, a new class of growth curves  has been 
proposed in \cite{Bhowmick} as solution of the following differential equation:
\begin{equation}\label{other_gen}
\frac{dN(t)}{dt}=b e^{-a t}t^c N(t), \qquad t>0,
\end{equation}
this extending both the Gompertz and Korf laws. 
\par
In order to introduce a new deterministic model of population growth which is to some 
extent  related with the Gompertz and Korf laws, hereafter we recall some basic issues. 
Due to (\ref{gompertz}), the Gompertz curve is expressed by:
\begin{equation}\label{gompertz_sol}
 N_G(t)=y  \exp{\left\{\frac{\alpha}{\beta}\left(1-e^{-\beta t}\right)\right\}},\quad t>0,
 \qquad N_G(0)=y>0,
\end{equation}
with $\alpha, \beta >0$, and is used to describe population dynamics in a confined habitat. It is also 
employed to describe plant disease progress \cite{Berger}, for mobile phone uptake \cite{Islam}, and 
as software reliability model \cite{Wood}. Above all, the Gompertz model plays a relevant role especially 
in modeling tumor growth (see \cite{Gompertz}, \cite{Laird_64}, \cite{Laird_65}). Until recent time,   
it is a specific reference in this field (see \cite{Milotti}, for instance), and mainly in its stochastic 
form (see \cite{Albano_06}). 
\par
According to (\ref{korf}), the Korf growth curve is given by
\begin{equation}\label{korf_sol}
N_K(t)=y \exp{\left\{\frac{\alpha}{\beta}\left(1-t^{-\beta}\right)\right\}},
\quad t>0,\qquad N_K(0)=0,
\end{equation}
with $\alpha, \beta, y >0$.  
In \cite{Podrazsky} the Korf growth function is used for the assessment of current and mean annual 
increments of Douglas-fir compared to other tree species (see also \cite{Zeide}). Moreover, in \cite{Sedmak} 
it is used for the construction of domestic yield tables, whereas Torres {\em et al.} \cite{Torres_A} used Korf 
and von Bertalanffy models to fit curves as non-linear effects models, showing that Korf model was superior. 
\par
On the ground of the above mentioned investigations, in this paper we aim to propose a new 
growth model which cannot be obtained as a special case of the models (\ref{log_gen}), 
(\ref{Koya-Goshu}) and (\ref{other_gen}). Moreover, even if our model has the same carrying 
capacity of the Gompertz and Korf laws, it is able to capture different evolutionary dynamics. 
Indeed, the new growth model is suitable to describe a phenomenon which behaves 
as the Gompertz curve at the beginning of the evolution (having the same initial value and 
initial slope), but for large times it grows slower than the Gompertz law, as well as the 
Korf curve. Specific details on mathematical properties and their biological meaning for the 
proposed model are pinpointed in the first part of the paper. 
\par
Until  now we considered deterministic growth process only. However, sto\-chastic fluctuations often are 
essential to describe real growth phenomena. Indeed, some of the previous models have been formulated 
in the stochastic environment too. 
We recall that two examples of density-dependent birth-death processes 
whose means satisfy the logistic and Gompertz equations are given in \cite{Parthasarathy_1990} and
\cite{Parthasarathy_1991}, and a general theory for some non-homogeneous density-dependent birth-death 
processes has been developed in \cite{Tan_Piantadosi} with special applications to stochastic logistic growth. 
Logistic and Gompertz type growth are used in \cite{Aagaard} as example to show the effect of the 
variability introduced into the generalized Poulsen population model by assuming that the random 
number of offspring depends on the population size. 
We recall that birth-death processes are widely considered within applications in ecology, genetics 
and evolution (see, e.g., Crawford and Suchard  \cite{CrawfordSuchard} and references therein). 
Other families of suitable point processes to describe population growth have been 
studied recently in Di Crescenzo  {\em et al.} \cite{DGN}, and in Cha and Finkelstein \cite{ChaF}. 
Moreover, diffusion stochastic processes have been proposed to take into account environmental fluctuations. 
For instance, the random effects of demographic and environmental stochasticity on a tumor immunology 
deterministic model is studied in \cite{Rosenkranz} throughout a stochastic diffe\-rential equation, 
whose solution is a limiting diffusion process.
Furthermore, a stochastic diffusion model related to a reformulation of the Richards growth curve is proposed 
in \cite{Roman_2015}, and a Gompertz-type diffusion process is introduced in \cite{Gutierrez}, by means of 
which bounded sigmoidal growth patterns can be modeled by time-continuous variables. 
\par
In agreement with some of the previously mentioned investigations, 
aiming to extend the new growth law to a more realistic setting characterized by the 
presence of randomness, we construct and study two suitable 
stochastic processes whose means correspond to the proposed growth model: 
(i) a non-homogeneous birth-death process, and (ii) a simple birth process. 
\par
The paper is organized as follows. In Section \ref{sect:model} the new model is introduced and its main 
properties are investigated, with special attention to the correction factor, the relative growth rate, the 
inflection point, the maximum specific growth rate, the lag time and the threshold crossing problem. 
A comparison to the Gompertz and Korf models is also performed  
with purpose of showing novelties of the proposed models, and analogies and differences 
among the three considered laws.  In Section \ref{sect:dataex}, some 
data analytic examples are considered in order to show that the new model can better describe some 
evolutionary phenomena with respect to the other ones. Section \ref{sect:linearBD} is devoted to a  linear 
time-inhomogeneous birth-death process whose mean behaves as the proposed deterministic model. 
In particular, we analyze the transition probabilities, the mean, the variance and the population extinction 
probability of this model. We also point out that the transition probabilities obtained in Proposition 
\ref{prop:1} correct an expression given in a previous paper. 
Furthermore, in Section \ref{sect:simpleB} we investigate a simple birth process 
mimicking the new growth model, giving attention to its transition probabilities, the mean, the variance, 
the index of dispersion, the coefficient of variation and the first-passage-time problem.
\par
Throughout the paper primes denote derivatives; moreover we set 
$\mathbb{N}=\{1,2,\ldots\}$ and $\mathbb{N}_0=\{0,1,2,\ldots\}$. 
\section{The growth model} \label{sect:model}
A general model for population growth can  often be described by an ordinary differential equation 
of the form 
\begin{equation}\label{eq:nt}
\frac{dN(t)}{dt}=\xi(t) N(t),\qquad t>0,
\end{equation}
where $\xi(t)>0$ is a time-dependent growth rate. Clearly, if 
$$
 \xi(t)= \xi_G(t)=\alpha e^{-\beta t}\quad \hbox{or} \quad \xi(t)=\xi_K(t)= \alpha t^{-(\beta +1)}, 
$$ 
then Eq.\ (\ref{eq:nt}) yields the differential equation (\ref{gompertz}) of the Gompertz growth and that of the Korf growth (\ref{korf}). 
Hereafter we consider an alternative choice of the growth rate, given by 
\begin{equation}\label{eq:xidit}
\xi(t)= \alpha(1+t)^{-(\beta+1)},\qquad t>0,
\end{equation}
so that, from (\ref{eq:nt}) we obtain the ordinary differential equation 
\begin{equation}\label{new}
\frac{dN(t)}{dt}=\alpha(1+t)^{-(\beta+1)}N(t),\qquad t>0.
\end{equation}
Accordingly, we propose the following curve for growth model, 
with $\alpha, \beta>0$, 
\begin{equation}\label{new_sol}
N(t)=y \exp{\left\{\frac{\alpha}{\beta}\left[1-(1+t)^{-\beta}\right]\right\}},\quad t>0,
\qquad N(0)=y>0,
\end{equation}
which is solution of (\ref{new}). 
In brief, such a model is motivated by the need of describing evolutionary dynamics 
characterized by non-zero initial values and approximately linear initial slope, and 
tending to a carrying capacity from below through a long-term power-law growth. 
We recall that various examples of population growth showing power-law behavior have 
been considered in the recent literature (see, for instance, Karev \cite{Karev}). 
\par
It is worth pointing out that the parameters $\alpha$ and $\beta$ are the growth and the decay rates, 
respectively, for the three models given in (\ref{gompertz_sol}), (\ref{korf_sol}) and (\ref{new_sol}). 
\par
The  model proposed in (\ref{new_sol}), as well as the Gompertz and Korf curves, 
stems from a linear equation. Indeed, we purpose to deal with a rather simple 
model that allows a feasible treatment from mathematical and statistical points 
of view, differently from more general equations similar to (\ref{log_gen}). 
\par
Note that equations (\ref{gompertz}), (\ref{korf}) and (\ref{new}) are all linear, characterized by 
growth rates $\xi(t)$ which are decreasing convex and asymptotically vanishing functions. 
Consequently, such common feature yields that the three curves $N_G(t)$, $N_K(t)$ 
and $N(t)$ share some characteristics. Indeed, due to (\ref{gompertz_sol}), 
(\ref{korf_sol}) and (\ref{new_sol}), such three curves are increasing in $t$, and are 
bounded by the same carrying capacity, i.e. 
\begin{equation}\label{carrying_capacity} 
 \lim_{t\rightarrow +\infty} N_G (t)=\lim_{t\rightarrow +\infty} N_K (t)=\lim_{t\rightarrow +\infty} N (t)
 =C\equiv y e^{\alpha/\beta}.
\end{equation}
Moreover, it is not hard to show that 
\begin{equation}
N_K (t)< N (t)< N_G (t) \quad\hbox{ for all $t>0$.} 
\label{eq:ordinam}
\end{equation}
In other terms, for fixed choices of the parameters $y, \alpha, \beta$, 
the new proposed model describes an intermediate growth between the (lower) Korf 
and (upper) Gompertz curves. Furthermore, 
in Remark \ref{rem:1} below we see that for small values of $t$ the curve $N(t)$ 
behaves similarly as $N_G (t)$, whereas for large times it grows similarly as $N_K (t)$, 
since it tends to the carrying capacity polynomially fast. 
\begin{remark}\label{rem:1}
The growth model (\ref{new_sol}) captures certain features of both the Gompertz and Korf 
laws. Indeed, the curve (\ref{new_sol}) behaves as the Gompertz curve (\ref{gompertz_sol}) 
for $t$ close to $0$, being 
$$
 N(0)=N_G(0)=y \quad  \hbox{and} \quad N'(0)=N_G'(0)=\alpha y.
$$
The Korf law (\ref{korf_sol}) has a rather different initial behavior, since $N_K(0)=0$ and 
$N_K'(0)= +\infty$. On the contrary, for large values of $t$, the proposed model behaves 
more similarly to the Korf law. Specifically, even though the three models have the same 
carrying capacity (\ref{carrying_capacity}), the three curves tend to $C$ according 
to different rules. Indeed, from Eqs.\ (\ref{gompertz_sol}), (\ref{korf_sol}) and (\ref{new_sol}) 
we have 
$$
 \frac{\beta}{\alpha}\Big|\log \frac{N_G(t)}{C}\Big|= e^{-\beta t}, 
 \quad  
 \frac{\beta}{\alpha}\Big|\log \frac{N_K(t)}{C}\Big|= t^{-\beta}, 
 \quad  
 \frac{\beta}{\alpha}\Big|\log \frac{N(t)}{C}\Big|= (1+ t)^{-\beta}.
$$  
Hence, for any curve $N_{\star}(t)$ the term $\big|\log \frac{N_{\star}(t)}{C}\big|$ 
tends to 0 when $t\to +\infty$. However, for the Gompertz law this limit is attained exponentially fast, 
whereas for the other two laws it is reached according to power law with exponent $\beta$.  
\end{remark}
\par
In Fig.\ \ref{fig:1}, the three growth curves are shown for some choices of the parameters. 
Note that when $\beta$ increases, the curves (\ref{korf_sol})  and (\ref{new_sol}) tend to the 
Gompertz one. 
\begin{figure}[t]
\begin{center}
\epsfxsize=5.1cm
\epsfbox{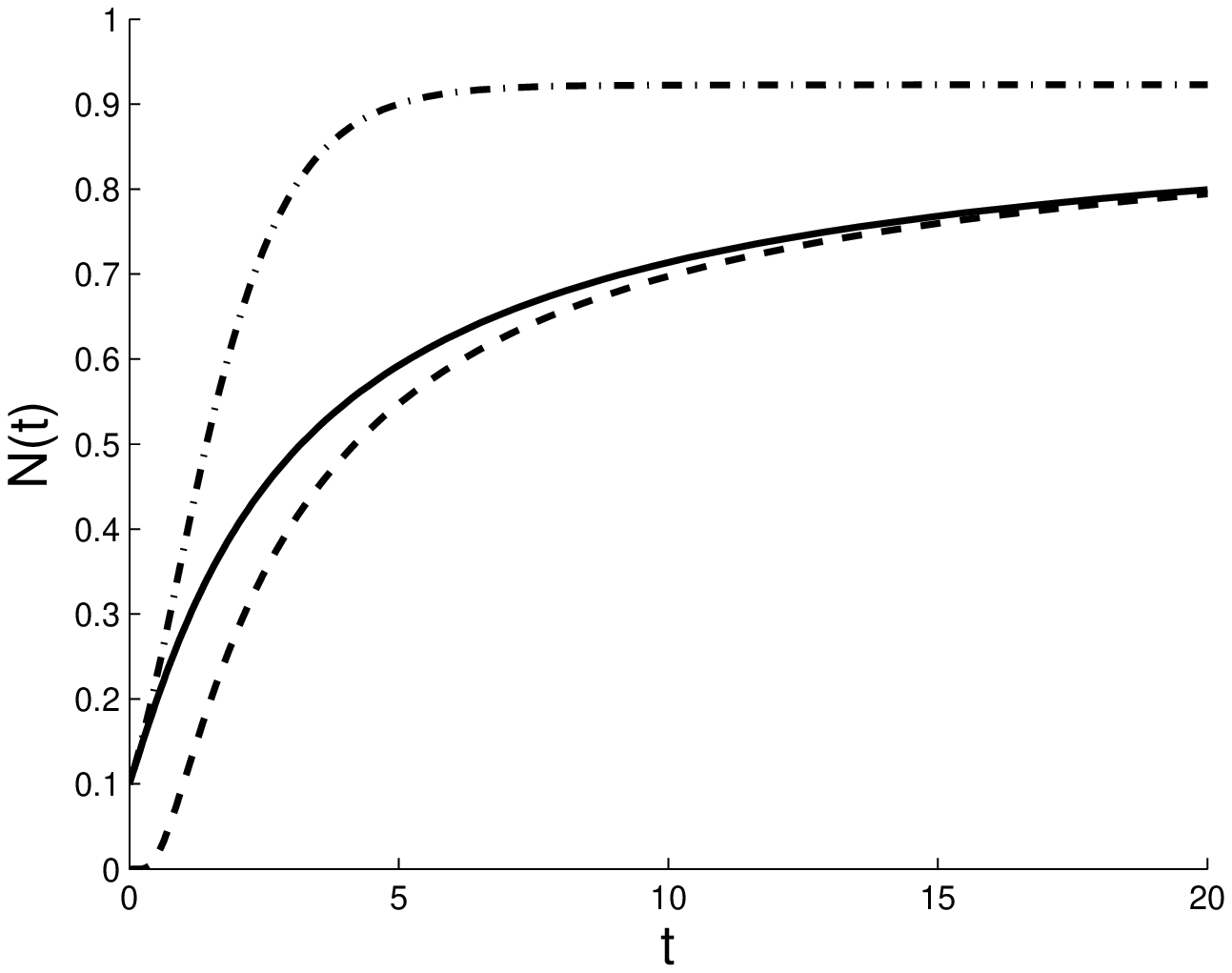}
\epsfxsize=5.1cm
\epsfbox{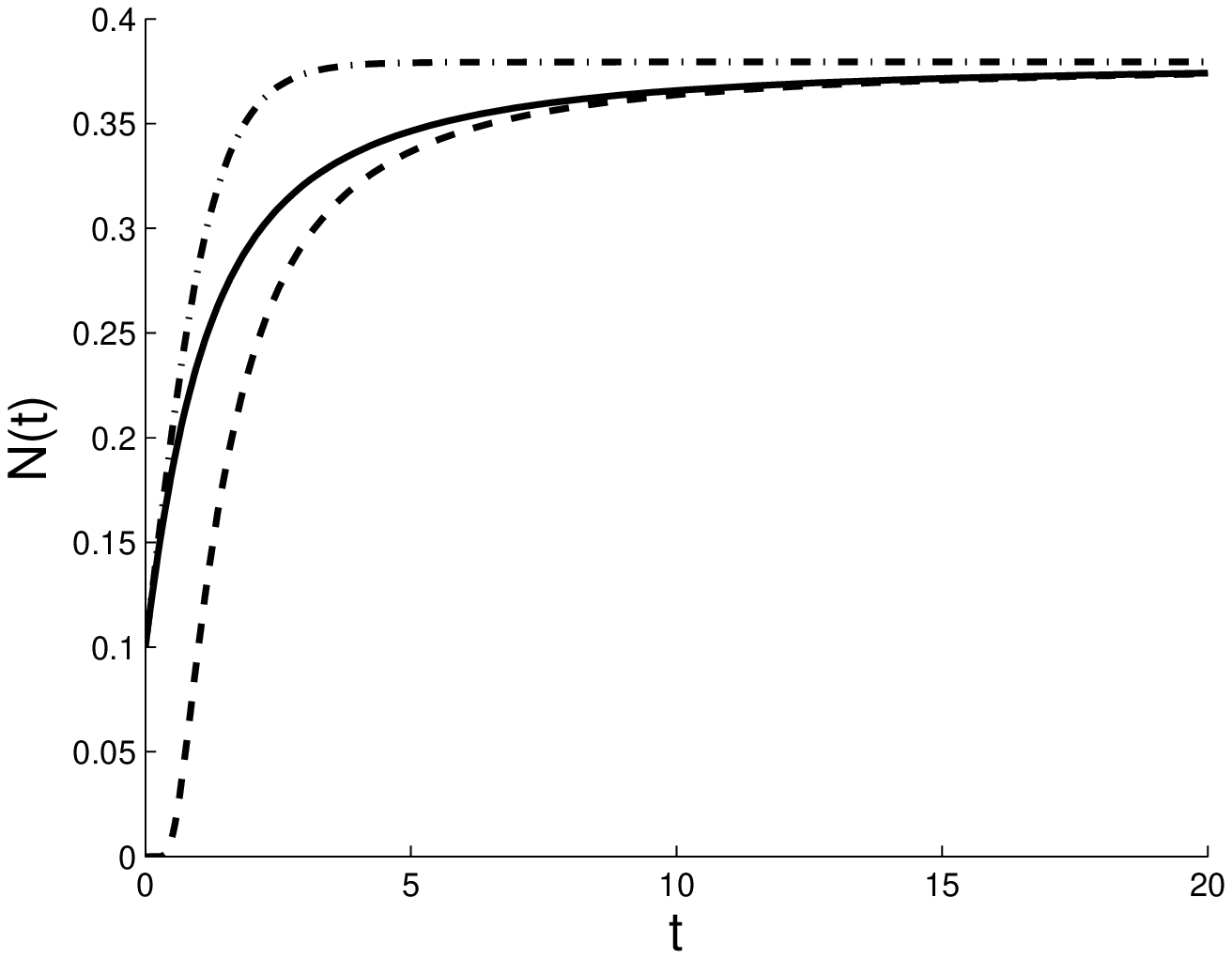}
\epsfxsize=5.1cm
\epsfbox{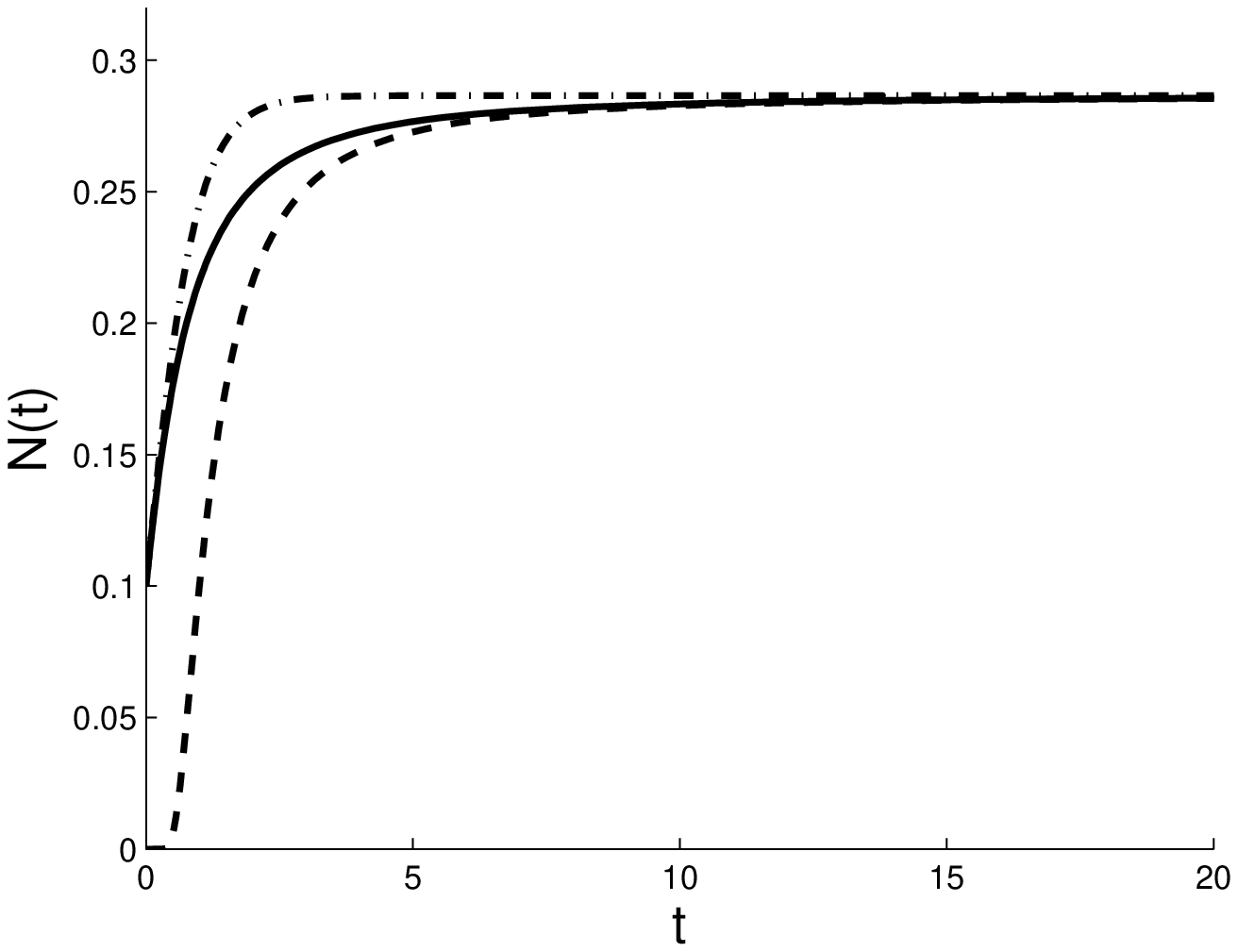}
\caption{The proposed curve (full), Korf curve (dashed), and Gompertz curve (dot-dashed),  
for $y = 0.1$, $\alpha = 2$, $\beta=0.9,\,1.5,\,1.9$, respectively, with $C=0.93,\,0.38,\,0.29$.}
\label{fig:1}
\end{center}
\end{figure}
\par
Some interesting limit behaviors of the curves (\ref{gompertz_sol}), (\ref{korf_sol}) and (\ref{new_sol}) 
are shown in Table 1, for $\alpha \rightarrow 0$, $\alpha \rightarrow +\infty$, $\beta \rightarrow 0$ and 
$\beta \rightarrow +\infty$, with $t>0$. We observe that when the growth rate $\alpha$ goes to zero and 
when $\alpha$ goes to infinity, the three curves have all the same behavior, i.e.\ they tend to $y$ 
in the first case, and to $+\infty$ in the second case. 
Indeed, the population size exhibits a very pronounced growth  when $\alpha$ is very large, since 
the carrying capacity tends to $+\infty$ when $\alpha$ diverges, due to (\ref{carrying_capacity}). 
Moreover, for $\beta \rightarrow \infty$ the curves  
admit the same limit $y$, except for the Korf case when $0<t<1$. 
\par
In the following we analyze some features of the proposed model, and we perform some comparisons 
with the Gompertz and the Korf models. 
%
%
%
\begin{table}[t]    
\begin{center} 
{\footnotesize
\begin{tabular}{cccccc}
   \hline
	  Model $N_{\star}(t)$  &   $N_{\star}'(0)$     &  $\displaystyle\lim_{\alpha \rightarrow 0}N_{\star}(t)$  
	  &   $\displaystyle\lim_{\alpha \rightarrow \infty}N_{\star}(t)$ 
	  &   $\displaystyle\lim_{\beta \rightarrow 0}N_{\star}(t)$
	  &   $\displaystyle\lim_{\beta \rightarrow \infty}N_{\star}(t)$\\
						&             &&&&\\
	\hline
					Eq.\ (\ref{new_sol})       &   $\alpha y$ &  $y$   & $+\infty$ & $y (1+t)^{\alpha}$ & $y$ \\
\footnotesize{$y \exp{\left\{\frac{\alpha}{\beta}\left[1-(1+t)^{-\beta}\right]\right\}}$}	&&&&&\\
	\hline
Eq.\	(\ref{korf_sol})      &   $ +\infty$    &  $y$     & $+\infty$ & $y t^{\alpha}$ & $\left\{
\begin{array}{ll}
0,& 0<t<1\\
y,  & t\geq 1
\end{array}\right.$  \\
	\footnotesize{$y \exp{\left\{\frac{\alpha}{\beta}\left(1-t^{-\beta}\right)\right\}}$}&&&&&\\
	\hline
Eq.\	(\ref{gompertz_sol})  &   $\alpha y$ &  $y$      &  $+\infty$   & $y e^{\alpha t}$  &  $y$  \\
	\footnotesize{$y  \exp{\left\{\frac{\alpha}{\beta}\left(1-e^{-\beta t}\right)\right\}}$}&&&&&\\
	\hline
\end{tabular}
}
\end{center}
\caption{Some characteristics of models (\ref{gompertz_sol}), (\ref{korf_sol}) and (\ref{new_sol}).}
\end{table}
\subsection{The correction factor and the relative growth rate}
Several growth models described by a function $N_\star (t)$ can be expressed in the 
form (see \cite{Talkington})
\begin{equation}\label{generic_form}
\frac{dN_\star(t)}{dt}=\beta N_\star (t) f\left[N_\star(t)\right].
\end{equation}
 The function $f$ is called a size covariate model since it is function of $t$ only through $N$ and it is widely used to represent the density dependent growth; indeed the {\em relative growth rate} $g$ (cf.\ 
\cite{Tsoularis}) is strictly related to $f$ via 
\begin{equation}\label{rgr}
g[N_\star(t)]:=\frac{1}{N_\star(t)}\frac{dN_\star(t)}{dt}\equiv \beta f[N_\star(t)].
\end{equation}
Note that the function $f$ is also called \textit{correction factor} because it expresses the deviation from the classical 
exponential growth (\ref{exp_gr}), for which $f(z)=1$ for all $z\geq 0$.
\par
 A suitable choice of $f(z)$ is
\begin{equation}\label{corr_fun_korf_new}
f(z)=  f_N(z):= \frac{\alpha}{\beta}\left(1-\frac{\beta}{\alpha}\log\frac{z}{y}\right)^{1+1/\beta},
\qquad z>0,
\end{equation}
which corresponds to the family of models given by
$$
N_{\star}(t)=y e^{{\alpha}/{\beta}}\left[1-(t+D)^{-\beta}\right],\qquad t>0,
$$
where $D$ is a constant. Note that if $D=1$ one obtains the model (\ref{new_sol}), whereas 
$D=0$ yields the Korf model (\ref{korf_sol}). The Gompertz model is obtained by choosing  
\begin{equation}\label{corr_fun_gomp}
f(z)= f_G(z):=\frac{\alpha}{\beta}- \log \frac{z}{y}, \qquad z>0,
\end{equation}
this leading to the following family  of population laws: 
$$
N(t)=y e^{ {\alpha}/{\beta}}\exp{\left\{-e^{-\beta t}-D\right\}},\qquad t>0,
$$
for $D=\alpha/\beta-1$. In Fig.\ 2 we compare the correction factor $f(z)$ for the models 
(\ref{gompertz_sol}), (\ref{korf_sol}) and (\ref{new_sol}), plotted against $\log z$. 
We note that the correction factor of the considered models are decreasing in $\beta$. 
Furthermore, from  (\ref{corr_fun_korf_new}) and (\ref{corr_fun_gomp}), for 
$C\equiv y e^{\alpha/\beta}$ we have  
$$
 f_N(z)= 1 \quad \hbox{for  }z= C \, e^{-\left(\frac{\alpha}{\beta}\right)^{1/(\beta+1)}}<C;
 \qquad f_G(z)\to 1 \quad \hbox{for  }z\to C.
$$ 
Hence, for both the new model and the Korf model the correction factor is equal to that of the 
exponential growth for a given value of the population size $z$ (for $z$ smaller than the carrying capacity $C$). On the contrary, for the  Gompertz model the correction factor tends to that of the exponential growth when  $z$ tends to $C$. 

%
\begin{figure}[t] 
\begin{center}
\epsfxsize=5.8cm
\epsfbox{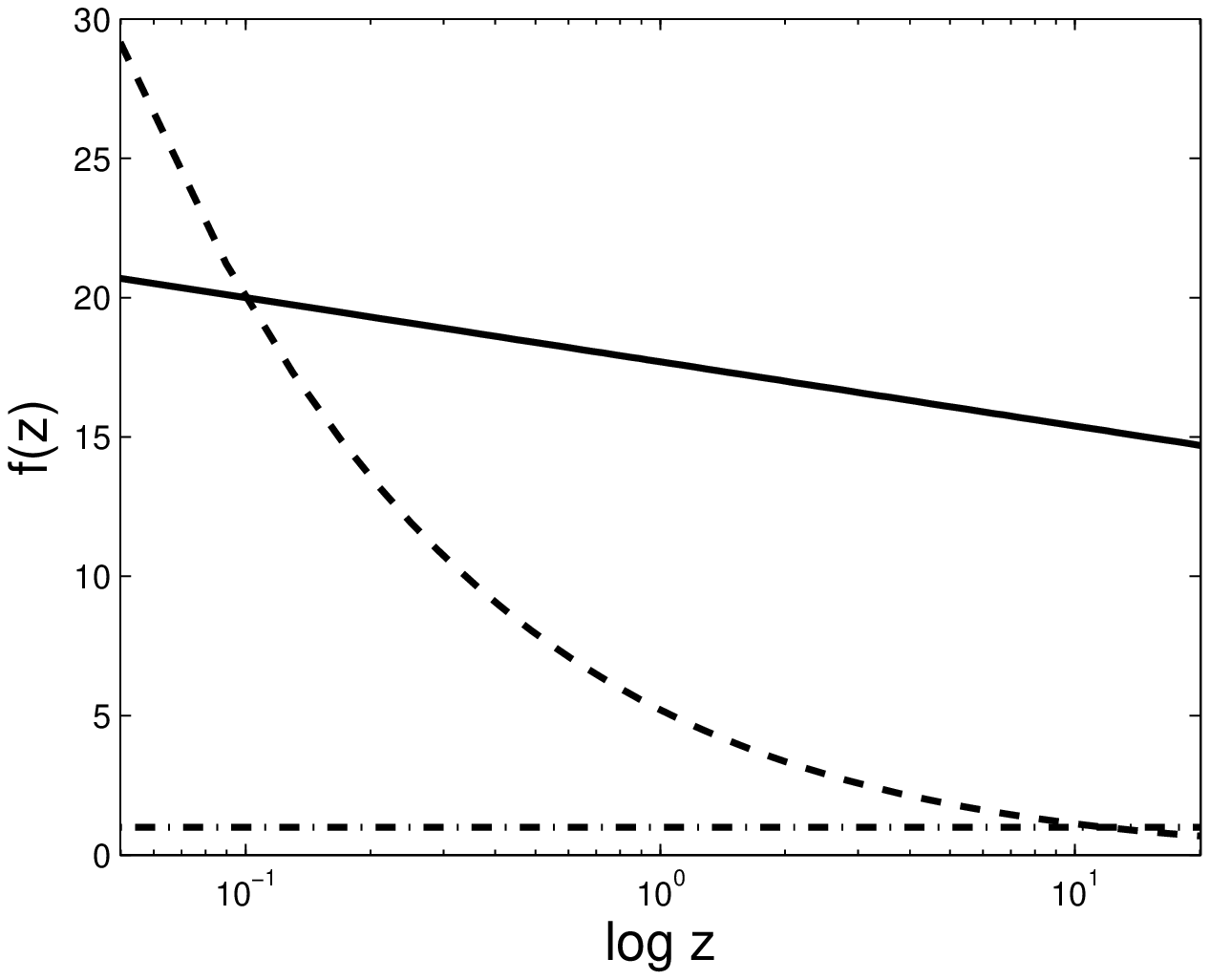} 
\epsfxsize=5.8cm
\epsfbox{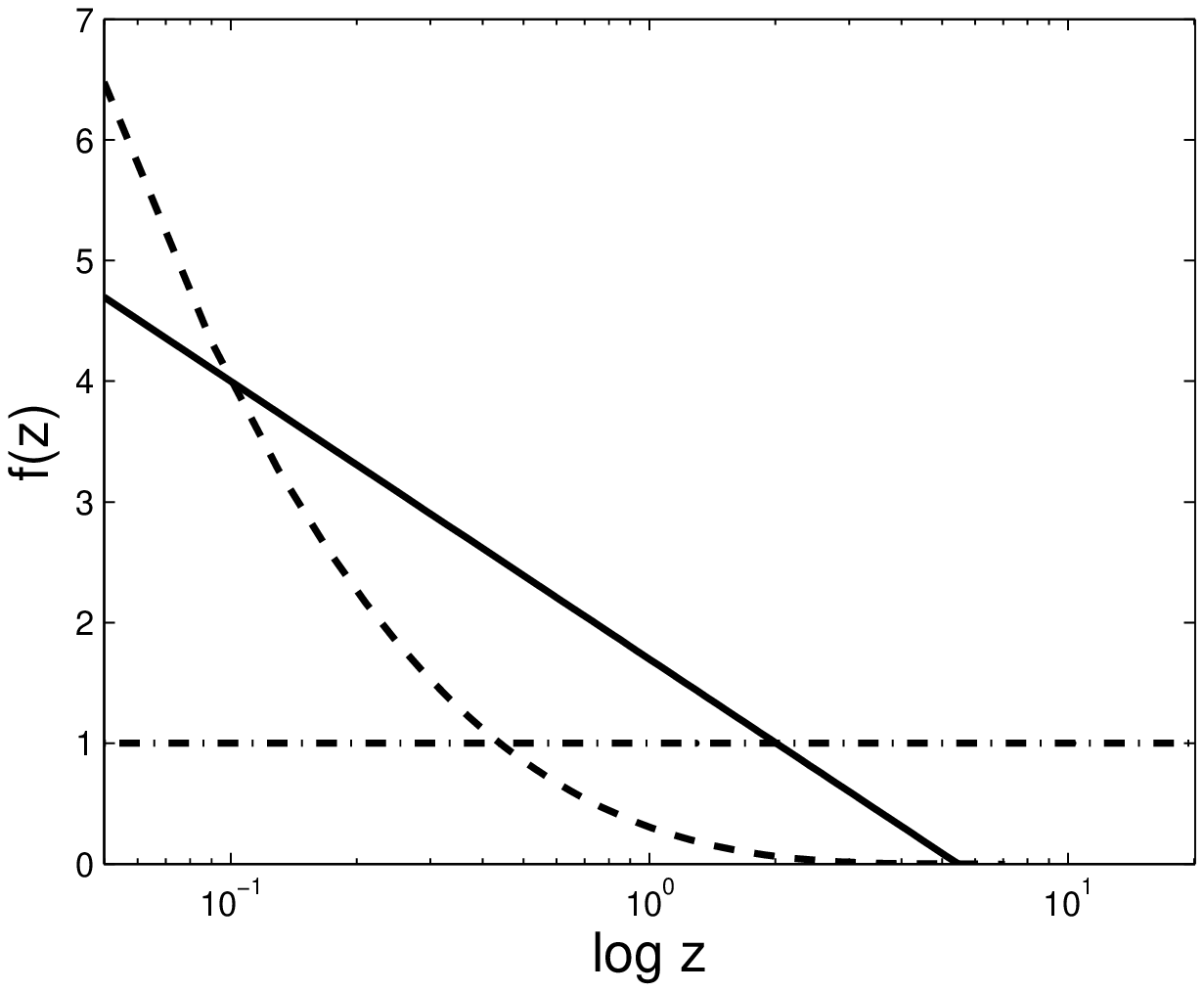}	
\caption{
The correction factor for models (\ref{gompertz_sol}) (solid line), 
(\ref{korf_sol}) and (\ref{new_sol}) (dashed line), for  $y = 0.1$, $\alpha = 2$, $\beta=0.1$ (left), 
$\beta=0.5$ (right), compared with $f(z)=1$ (dot-dashed line).}
\end{center}
\end{figure}
%
%
%
\par
Regarding the relative growth rate $g$ given in (\ref{rgr}),  
for models (\ref{korf_sol}) and (\ref{new_sol}) we have
$g'(z)=-\frac{1+\beta}{z}\left(1-\frac{\beta}{\alpha}\log\frac{z}{y}\right)^{1/\beta}$, 
whereas, due to (\ref{corr_fun_gomp}), for the 
Gompertz model (\ref{gompertz_sol}) we have $g'(z)=-\beta/z$. Since 
$0< z<y e^{\alpha/\beta}=C$, the relative growth rate $g(z)$ is decreasing for all the 
considered models, and reaches the minimum at the carrying capacity $C$. 
%
\subsection{The inflection point}
Let us now focus on the inflection point of the growth model (\ref{new_sol}). 
Clearly, this is of high interest in population growth since for sigmoidal curves 
such point expresses the instant when the growth rate is maximum. 
From (\ref{new}) we have
$$
 \frac{d^2N(t)}{dt^2}=\alpha (1+t)^{-(\beta+2)} N(t)\left[\alpha(1+t)^{-\beta}-\beta-1\right], 
 \qquad t>0.
$$
Hence, if $\alpha\leq \beta+1$ the curve (\ref{new_sol}) has downward concavity for all $t$, 
whereas if $\alpha>\beta +1$ then $N(t)$ is sigmoidal with inflection point
\begin{equation}\label{infl_point_new}
t_N=\left(\frac{\alpha}{\beta+1}\right)^{\frac{1}{\beta}}-1.
\end{equation}
The population at the inflection point  is given by
%
%
$$N(t_N)=ye^{-1-\frac{1}{\beta}} e^{\frac{\alpha}{\beta}}= C e^{-1-\frac{1}{\beta}},$$
%
where $C$ is the carrying capacity, given in (\ref{carrying_capacity}).
\par
In Table 2, the inflection points ($t_N,\,t_K,\,t_G$) and the population at the inflection points 
($N(t_N),\,N_K(t_K),\,N_G(t_G)$) are shown for the three models (\ref{new_sol}), (\ref{korf_sol}) 
and (\ref{gompertz_sol}), respectively. Note that
$$
N_K(t_K)=N(t_N)<N_G(t_G).
$$
Hence, whereas for a fixed time $t$ the considered growth curves are ordered 
according to (\ref{eq:ordinam}), the latter equation shows that the new model and the Korf 
model evaluated at the inflection points have identical population size, which is smaller 
than that of the  Gompertz model. 
\par
Let us now analyze some interesting limits of the population at the inflection point. 
From Table 2 we have, with $N_K(t_K)=N(t_N)$,  
$$
\lim_{\alpha \rightarrow 0} N(t_N)=y e^{-1} e^{-\frac{1}{\beta}},
\qquad \lim_{\beta \rightarrow +\infty} N(t_N)=y e^{-1},
$$
$$
\lim_{\beta \rightarrow 0} N(t_N) =\left\{
\begin{array}{ll}
+\infty,  &\qquad \alpha>1\\
y e^{-1}, &\qquad  \alpha=1  \\
0,&\qquad \alpha<1,
\end{array}\right.
\qquad
\lim_{\alpha \rightarrow +\infty} N(t_N)=+\infty,
$$
whereas for the Gompertz model it results
$$
\lim_{\alpha \rightarrow 0} N_G(t_G) =\lim_{\beta \rightarrow +\infty} N_G(t_G) =y e^{-1},
$$
$$
\lim_{\beta \rightarrow 0} N_G(t_G) =\lim_{\alpha \rightarrow +\infty} N_G(t_G) =+\infty.
$$
%
%
\begin{table}[t]    
\begin{center}
\small{
%
	%
\begin{tabular}{cccc}
   \hline
  Model    &     Inflection   &  Population at the     \\
	         &       point      &  inflection point       \\
	\hline
						Eq. (\ref{new_sol})     &     $t_N=\left(\frac{\alpha}{\beta+1}\right)^{1/\beta}-1$, \footnotesize{$\alpha>\beta+1$} & $N(t_N)=C e^{-1-1/\beta}$ \\
	& \footnotesize{$N$ is concave for $\alpha\leq\beta+1$} &           \\
	\hline
	Eq. (\ref{korf_sol})      &     $t_K=\left(\frac{\alpha}{\beta+1}\right)^{1/\beta}$     &  $N_K(t_K)=C e^{-1-1/\beta}$  \\	
	\hline
	Eq. (\ref{gompertz_sol})  &     $t_G=\frac{1}{\beta}\log\left(\frac{\alpha}{\beta}\right)$, \footnotesize{$\alpha>\beta$}                                                       &    $N_G(t_G)=C e^{-1}$         \\
	        &    \footnotesize{$N_G$ is concave for $\alpha\leq\beta$} &     \\
\hline
\end{tabular}
}
\caption{The inflection points and the population sizes at the inflection points are shown 
for models (\ref{new_sol}), (\ref{korf_sol}) and (\ref{gompertz_sol}).}
\end{center}
\end{table}
%
\subsection{The maximum specific growth rate and the lag time}
In several fields it is interesting to investigate a growth curve in proximity of the inflection point, 
by approximating linearly the curve in that point. 
This is typically of interest in phenomena that exhibit lag, growth, and asymptotic phases, 
such as the growing process of length or mass of some organisms or populations 
(see Zwietering {\em et al.}\ \cite{Zwietering} for details). For a generic growth curve 
$N_\star(t)$ we introduce the {\em maximum specific growth rate} $\mu_{\star}$, defined as 
the coefficient of the tangent to the curve in the inflection point $t_\star$, i.e.
\begin{equation}\label{max_sp_gr_rate}
\mu_\star=\frac{dN_\star(t)}{dt}\Big|_{t=t_\star}.
\end{equation}
Moreover, $\lambda_\star$ denotes the lag time  defined as the $x$-axis intercept of this tangent. 
Specifically, sigmoidal functions describing evolutionary phenomena show a phase 
in which the specific growth rate starts at a zero value and then accelerates to a maximal 
value $\mu$ in a certain period of time, resulting in a lag time $\lambda$. The value of 
$\mu$ is given by the slope of the line when the grow is exponential.
\par
For the  model (\ref{new_sol}), for $\alpha>\beta+1$, recalling (\ref{new}) and (\ref{infl_point_new}), 
due to (\ref{max_sp_gr_rate}) the maximum specific growth rate is  
\begin{equation}\label{mu_N}
 \mu_N=y\frac{(\beta+1)^{1/\beta+1}}{\alpha^{1/\beta}}\exp{\left\{\frac{\alpha-1}{\beta}-1\right\}}.
\end{equation}
Moreover, recalling the expressions of $t_N$ and $N(t_N)$ given in Table 2, the tangent curve 
in $\left(t_N,\,N(t_N)\right)$  is
\begin{equation}\label{tang_new}
 n=\mu_N t+y\exp{\left\{\frac{\alpha-1}{\beta}-1\right\}}
 \left[\frac{(\beta+1)^{1/\beta+1}}{\alpha^{1/\beta}}-\beta\right],
\end{equation}
with $\mu_N$ expressed in (\ref{mu_N}). (For notation clarity, we denote by $n$ the $y$-axis.)  
The lag time $\lambda_N$ for the model (\ref{new_sol}) is the $t$-axis intercept of 
(\ref{tang_new}), that is 
$$
\lambda_N=\beta\frac{\alpha^{1/\beta}}{(\beta+1)^{1/\beta+1}}-1;
$$
note that $\lambda_N$ is positive if, and only if, $\alpha >(\beta+1)^{\beta+1}/\beta^{\beta}$. 
\begin{table}[t]    
\begin{center}
%
{\footnotesize
\begin{tabular}{cccc}
   \hline
	&&&\\
$N_{\star}(t)$   &   $\mu_\star$                    &  Intercept of  &  $\lambda_\star$ \\
						&&                                      tangent curve &\\
	\hline
	         &  &&                                                   \\
 $N(t)$   &  $y\frac{(\beta+1)^{\frac{1}{\beta}+1}}{\alpha^{\frac{1}{\beta}}}e^{\frac{\alpha-1}{\beta}-1}$   & $ye^{\frac{\alpha-1}{\beta}-1}\left[\frac{(\beta+1)^{1/\beta+1}}{\alpha^{1/\beta}}-\beta\right]$  & $\beta\frac{\alpha^{1/\beta}}{(\beta+1)^{1/\beta+1}}-1$  \\
	{}  &  {\tiny ($\alpha>\beta+1$)} &  {\tiny ($\alpha>\beta+1$)} & {\tiny $\left(\alpha >(\beta+1)^{\beta+1}/\beta^{\beta}\right)$}\\   
	\hline
	&&&\\
	$N_K(t)$     &  $y\frac{(\beta+1)^{\frac{1}{\beta}+1}}{\alpha^{\frac{1}{\beta}}}e^{\frac{\alpha-1}{\beta}-1}$     &  $-y \beta e^{\frac{\alpha-1}{\beta}-1}$   &    $\lambda_N+1$        \\
	&&&\\
	\hline
	&&&\\
	$N_G(t)$  &  $y \beta e^{\alpha/\beta-1}$ &  $y e^{\alpha/\beta-1}\left(1-\log \frac{\alpha}{\beta}\right)$  & $\frac{1}{\beta}\log \frac{\alpha}{\beta}-\frac{1}{\beta}$\\	
	{\tiny($\alpha>\beta$)}&&&   \\
	\hline
	\end{tabular}
	}
\end{center}
\caption{The maximum specific growth rate, the intercept of tangent line  and the lag 
time for the models (\ref{new_sol}), (\ref{korf_sol}) and (\ref{gompertz_sol}).}
\end{table}
\par
For comparison purposes, Table 3 shows the maximum specific growth rate, the 
intercept of the tangent curve and the lag time for the three models (\ref{gompertz_sol}), 
(\ref{korf_sol}) and (\ref{new_sol}). The tangent lines (dotted curves) and the lag times 
(asterisk points) for the three models are plotted in Fig.\ 3. 
\begin{figure}[t]
\begin{center}
 model (\ref{new_sol}) \hspace{5cm}  model (\ref{korf_sol}) \\
 \epsfxsize=5.8cm
\epsfbox{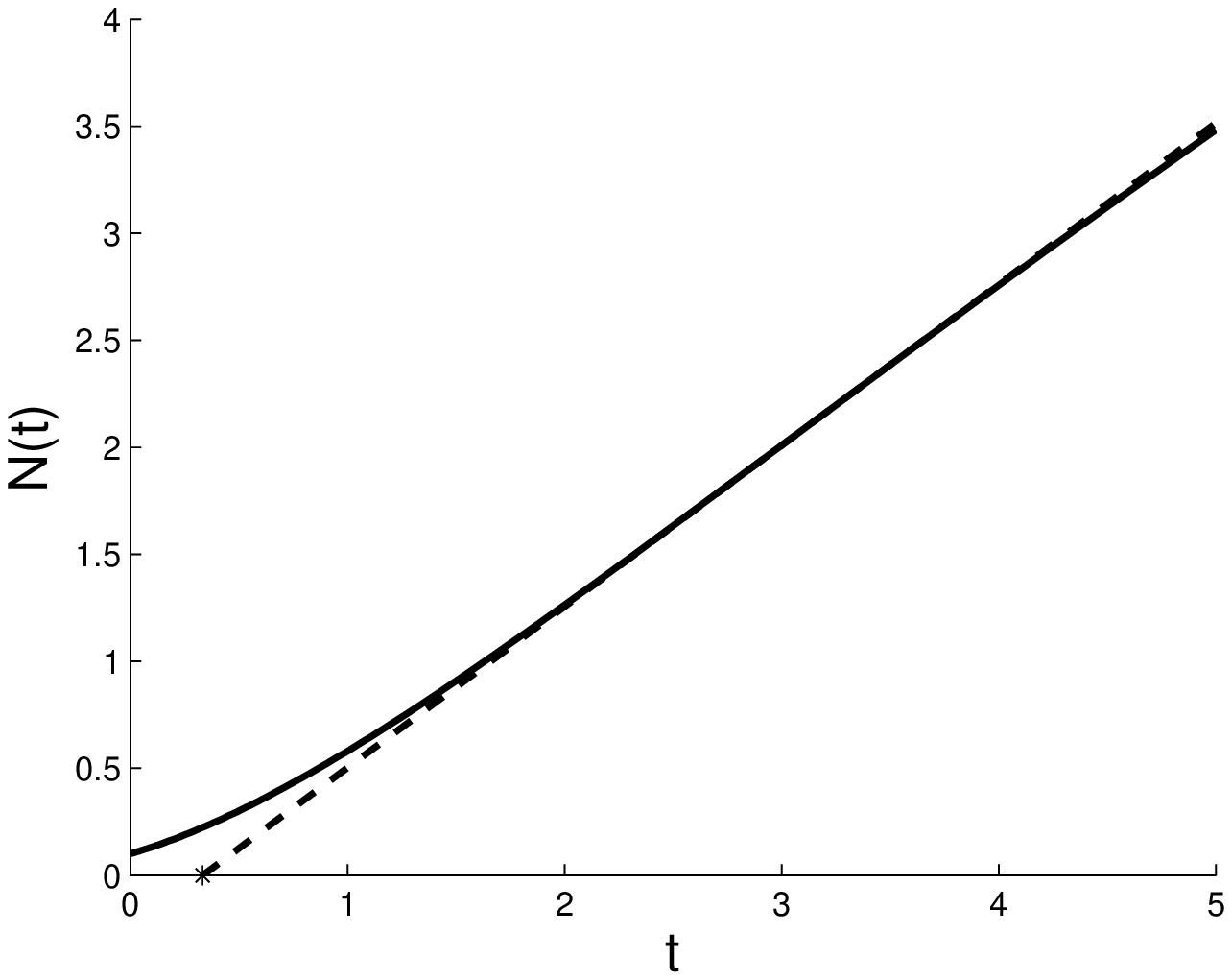}
\epsfxsize=5.8cm
\epsfbox{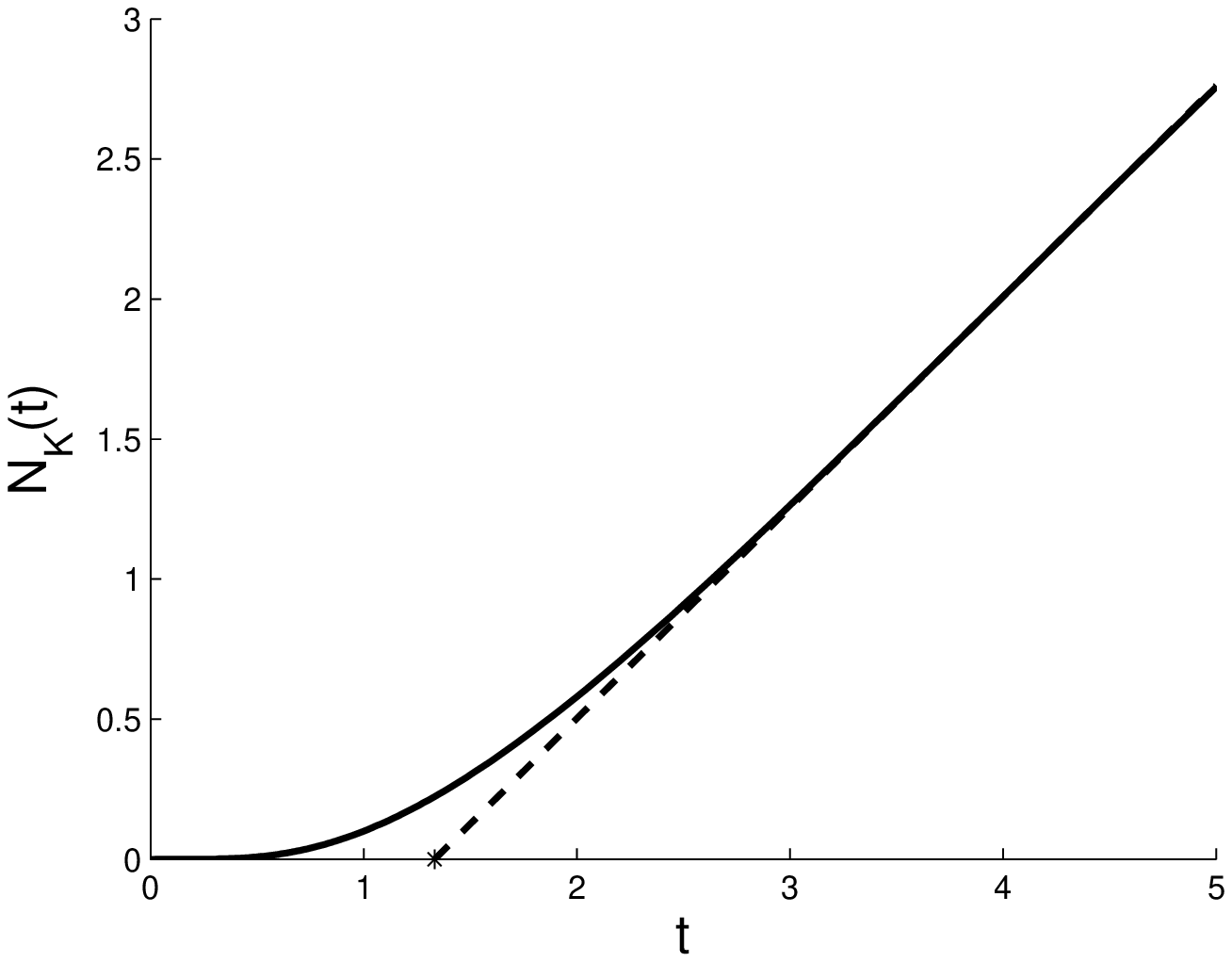}	\\
 model (\ref{gompertz_sol}) \\
\epsfxsize=5.8cm
\epsfbox{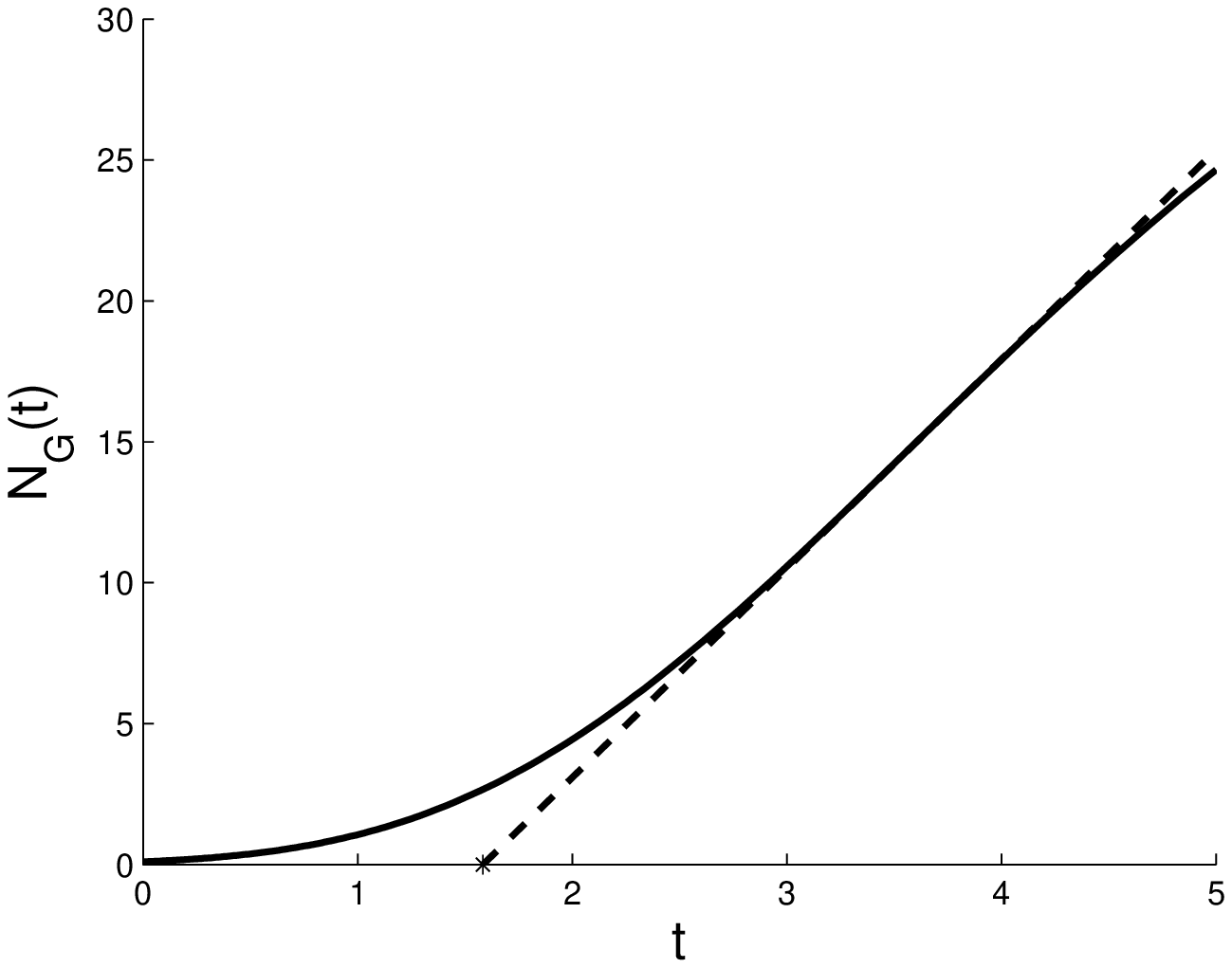}
\caption{The tangent lines (dotted curves) and the lag times (asterisk points) 
with $y=0.1$, $\alpha=3$ and $\beta=0.5$.}
\end{center}
\end{figure}
%
%
\subsection{Threshold crossing}\label{sec:Thresholdcrossing}
Studies on population growth often focus on the time that the population size spends below 
(or above) a certain threshold, say $S$. 
For instance, the upward threshold crossing problem is relevant to determine suitable therapeutic 
protocols related to Gompertz-like growth model for tumor cells evolution (cf.\ \cite{Albano_06}). 
Moreover, criteria based on threshold level crossing are also employed for intermittent cancer 
therapies (see \cite{SuzukiAihara}).  
\par
For an increasing growth model $N_\star(t)$, 
let us then analyze the time instant in which $N_\star(t)$ crosses $S$, with $S>N_\star(0)$. 
In the context of bounded populations, we consider as threshold a percentage $p$ of the 
carrying capacity $C$, so that 
$$
 S=pC,\qquad N_\star(0)<S<C.
$$ 
The (upward) threshold $S$ may represents a critical value in tumor cells dynamics or 
a superimposed boundary in  population evolution. 
Taking into account the value of $C$ in (\ref{carrying_capacity}) and that $N_\star(0)=y$, 
the following condition on $p$ holds for the model  (\ref{new_sol}):
$e^{-\alpha/\beta}<p<1$. Hence, denoting by $\theta_N$ the crossing time instant of 
$N(t)$ through the threshold $S$, and recalling (\ref{new_sol}) the solution of equation 
$N(\theta_N)=pC$ is 
$$  
 \theta_N=\left(\frac{\beta}{\alpha}\log \frac{1}{p}\right)^{-\frac{1}{\beta}}-1,
 \qquad e^{-\alpha/\beta}<p<1.
$$
Similarly, for the Korf and Gompertz models one has
$$
 \theta_K=\theta_N+1,
 \quad 0<p<1,
 \qquad 
 \theta_G={-\frac{1}{\beta}}\log\left(\frac{\beta}{\alpha}\log\frac{1}{p}\right),
 \quad e^{-\alpha/\beta}<p<1,
$$
respectively, with $\theta_G<\theta_N<\theta_K$. Such threshold crossing times are 
plotted in Fig.\ 4 as (increasing) function of $p$ in a suitable instance. 
As mentioned above, they could be employed for cancer therapies in order to establish 
suitable operational times in therapeutic protocols. 
\begin{figure}[t]
\begin{center}
\epsfxsize=7cm
\epsfbox{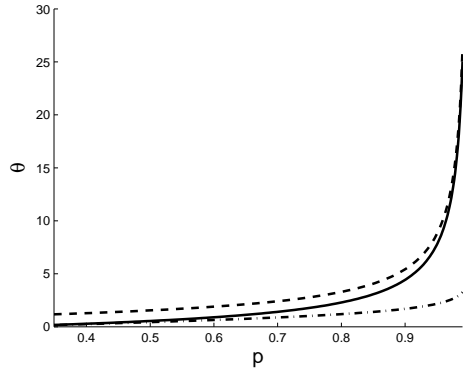}
\caption{The threshold crossing times $\theta_N$ (full line), $\theta_K$ (dashed line) and $\theta_G$ 
(dot-dashed line) as function of $p$, with $0<p<1$ for the Korf model and $e^{-\alpha/\beta}<p<1$ 
for the others, with $\alpha=2$, $\beta=1.9$ and $y=0.1$.}
\end{center}
\end{figure}
\section{Data analytic examples}\label{sect:dataex}
In this section we consider some data analytic examples for which model (\ref{new_sol}) 
provides a good fit. We deal with  the following data sets collected in Table 12.7 and Table 12.14 
of Lindsey \cite{Lindsey} in which a certain feature of the growth behavior of the small mammal 
pikas and of a colony of cells is recorded: 
\begin{itemize}
\item[(i)] 
the weight  $n_i$ of a pregnant Afghan pikas (g) over $14$ equally spaced periods from conception 
to parturition, recorded during an experiment. 
\item[(ii)] 
the size $n_i$ of a closed colony of \textit{Paramecium aurelium} are recorded at $18$ 
time instants in a given experiment.
\end{itemize}
A thorough description of the above data and the relevant experiments is provided in Lindsey \cite{Lindsey}. 
The corresponding data are fitted by nonlinear regression thanks to the classical algorithm 
based on the minimization of the sum of the squares of the differences between the measured and 
predicted values. We use the routine \textit{lsqcurvefit} of {\sc Matlab}$^{\footnotesize{\rm \textregistered}}$,     
that solves nonlinear curve-fitting (data-fitting) problems in least-squares sense, where the 
predicted values are obtained from the nonlinear equations of the models (\ref{gompertz_sol}), 
(\ref{korf_sol}) and (\ref{new_sol}). Once determined the parameters of the three models, 
in order to evaluate the attained approximations, we compare the growth curves by using 
the ISRP growth metric, introduced in Bhowmick {\em et al.}\ \cite{Bhowmick_et_al}. 
%
\begin{figure}[t]
\begin{center}
  (i) \hspace{6cm}    (ii) \\
  \epsfxsize=5.8cm
\epsfbox{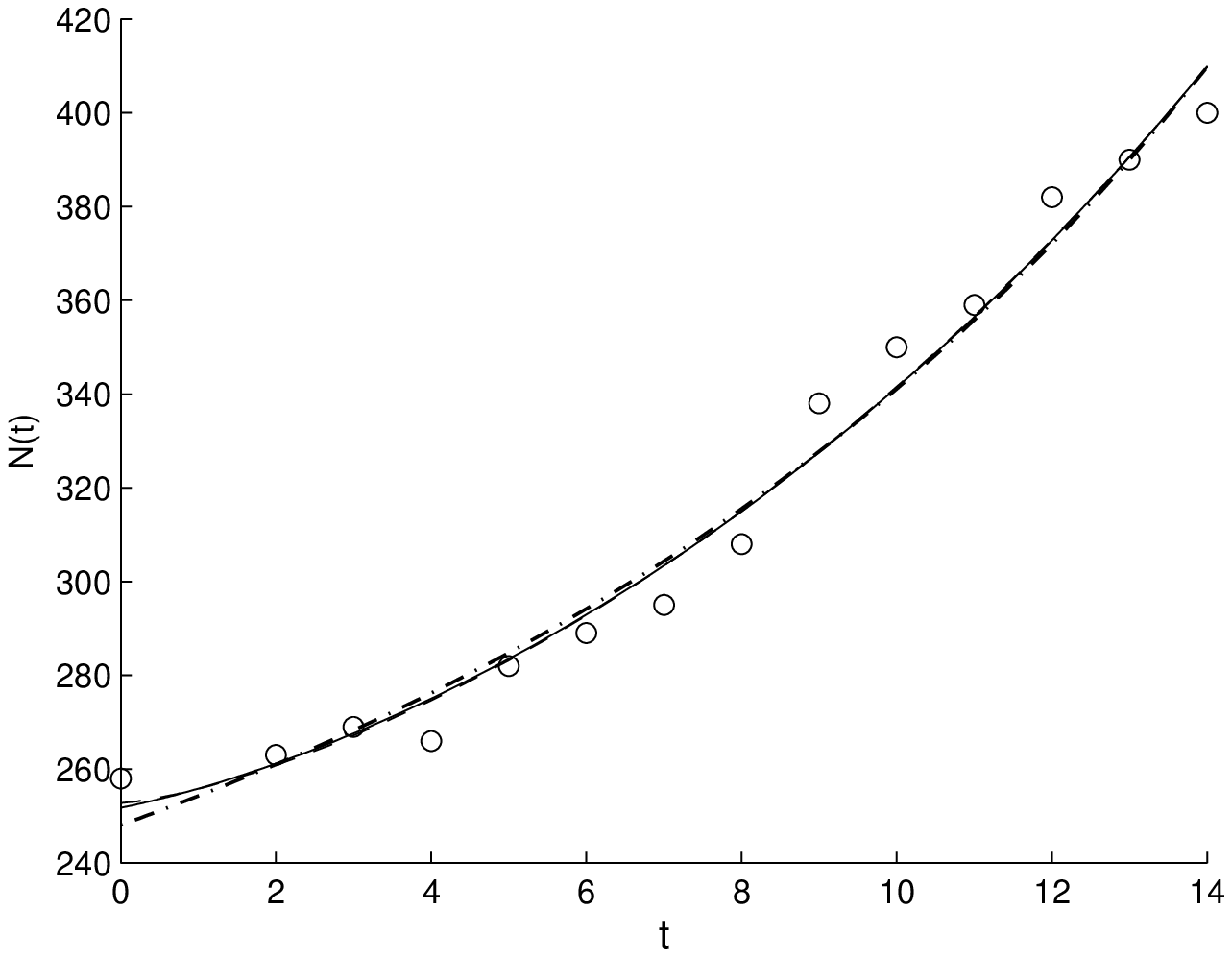}
  \epsfxsize=5.8cm
\epsfbox{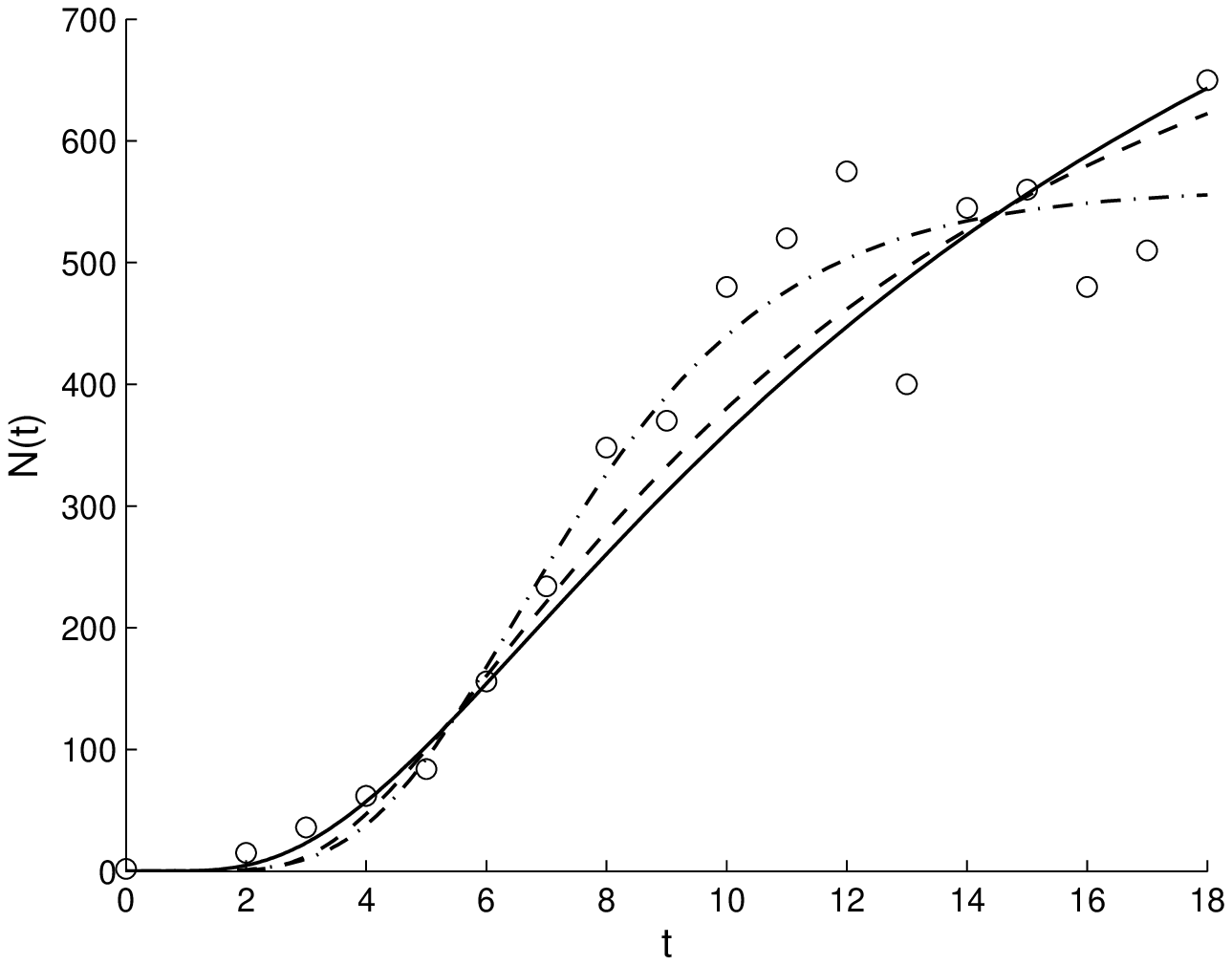}\\
\caption{Interpolation of  data sets (i), (ii) and (iii), under the growth models 
(\ref{new_sol}) (full), (\ref{korf_sol}) (dashed) and (\ref{gompertz_sol}) (dot-dashed).} 
\end{center}
\end{figure}
%
\par
For the three growth models, in Fig.\ 5 we show the interpolation of the data sets 
mentioned above. In both cases we have reasonably good fit of the data. 
We recall that the ISRP growth metric has been introduced recently in  \cite{Bhowmick_et_al} 
aiming to determine the true growth curve that best fits the data (statistically). Indeed, such a 
metric provides an estimate of the rate parameter corresponding to the identified growth model 
in specific time intervals. Note that this is in contrast to the usual $R^2$-criterion, 
which does not allow to estimate the relevant parameters. 
\par
The differential equation (\ref{new}) can be rewritten as
%
$$
 \frac{1}{N(t)}\frac{dN(t)}{dt}=\alpha(1+t)^{-(\beta+1)},\qquad t>0.
$$
%
Hence, in an experimental framework in which the population is sampled through a suitable 
time interval, according to Bhowmick {\em et al.}\ \cite{Bhowmick_et_al} parameter $\alpha$ 
can be also interpreted as the ``Overall Rate Parameter'' (when it is computed for the entire 
experimental time frame) or as the ``Interval Specific Rate Parameter'' (ISRP), denoted 
by $\alpha(\Delta t)$ (when it is estimated on the basis of a specific time interval having 
width $\Delta t$). Consequently, under the initial condition $N(0)=y$, the growth law can 
be represented in the form
\begin{equation}\label{diff_eq_metric2}
 N(t)=y  e^{\alpha s(t)}, \qquad t\geq 0,
\end{equation}
for a suitable function $s(t)$. Hence, the 
ISRP can be computed as
\begin{equation}\label{ISRP}
 ISRP= \alpha(\Delta t)=\frac{1}{s(t+\Delta t)-s(t)}\log\left(\frac{N(t+\Delta t)}{N(t)}\right).
\end{equation}
We note that for the Gompertz model, for the Korf model, and for the new proposed model, 
in all cases the growth curve has the form (\ref{diff_eq_metric2}). Hence, due to Eqs.\ 
(\ref{gompertz_sol}),  (\ref{korf_sol}) and (\ref{new_sol}), for the three models
the function $s(t)$ is given, respectively, by
$$
s_G(t)=\frac{1-e^{-\beta t}}{\beta}, 
\quad 
s_K(t)=\frac{1-t^{-\beta}}{\beta},
\quad 
s_N(t)=\frac{1-(1+t)^{-\beta}}{\beta}.
$$
It follows that the ISRP for the three models is:
%
\begin{figure}[t]
\begin{center}
  (i) \hspace{6cm}    (ii) \\
  \epsfxsize=6.2cm
\epsfbox{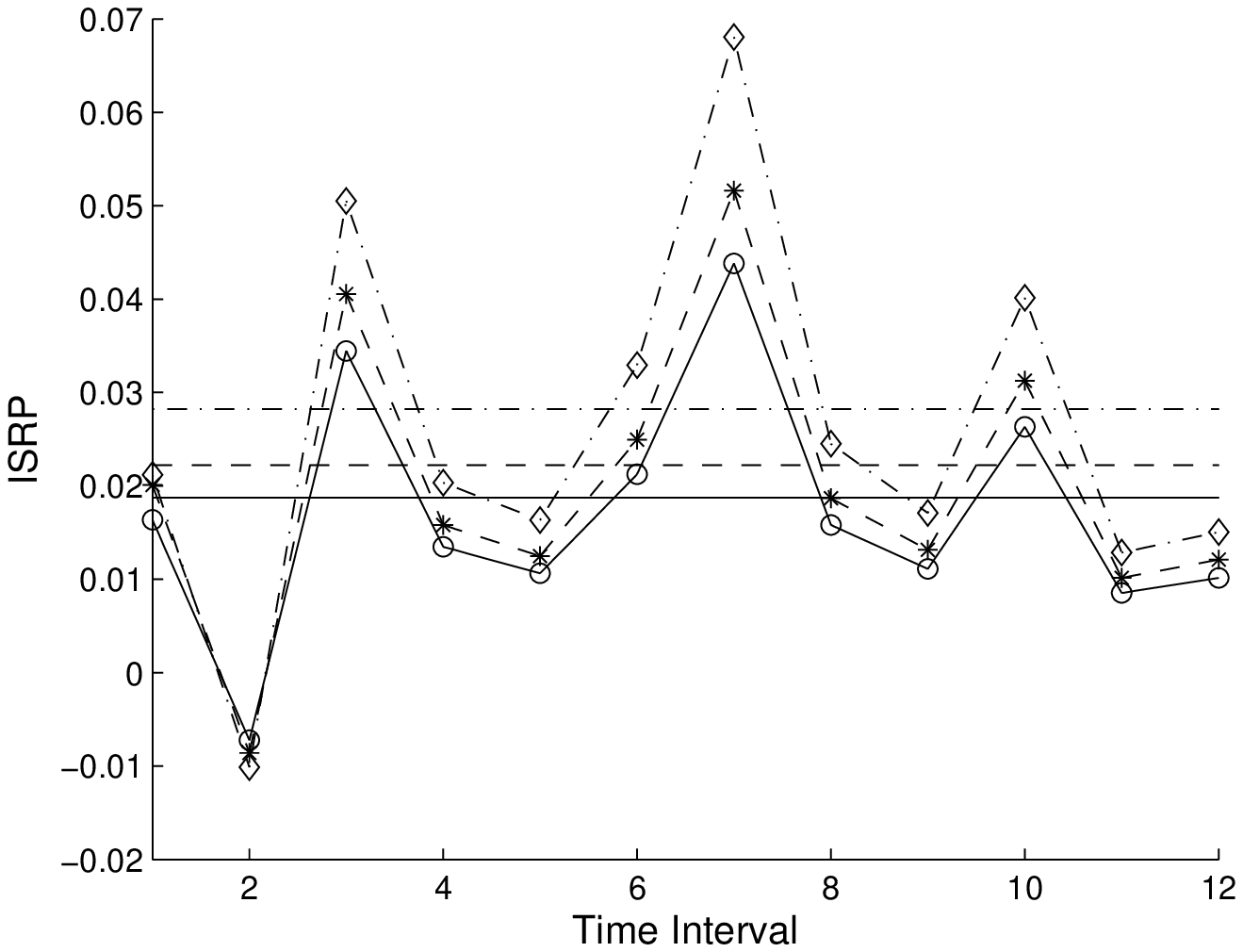}
  \epsfxsize=6.2cm
\epsfbox{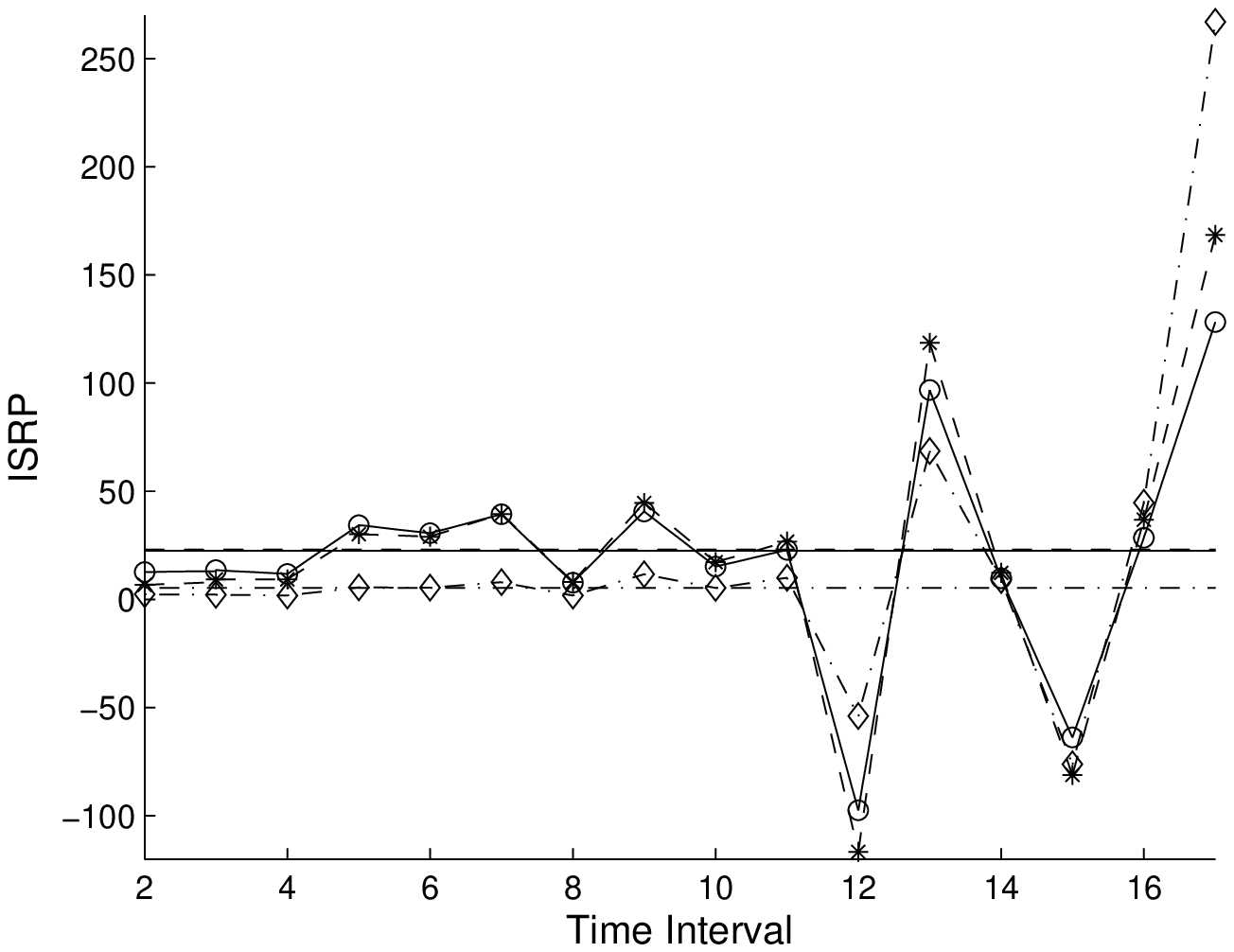} \\
\caption{ISRP and the constant estimated rate $\alpha$ are plotted for the growth models 
(\ref{new_sol}) (full), (\ref{korf_sol}) (dashed) and (\ref{gompertz_sol}) (dot-dashed).} 
\end{center}
\end{figure}
%
\begin{eqnarray}
&& ISRP_G= \frac{\beta}{e^{-\beta t}-e^{-\beta (t+\Delta t)}}
\log\left(\frac{N(t+\Delta t)}{N(t)}\right), 
\nonumber\\
&& ISRP_K= \frac{\beta}{t^{-\beta}-(t+\Delta t)^{-\beta}}
\log\left(\frac{N(t+\Delta t)}{N(t)}\right),
\nonumber \\
&& ISRP_N= \frac{\beta}{(1+t)^{-\beta}-(1+t+\Delta t)^{-\beta}}
\log\left(\frac{N(t+\Delta t)}{N(t)}\right).
\nonumber
\end{eqnarray}
For a more tight comparison, the 3 models are then used to fit the datasets of cases 
(i) and (ii). Their comparative performances are discussed based on the ISRP metric. 
The extent of departure between the line of the constant rate parameter $\alpha$ of the 
corresponding growth law and the estimated ISRP reflects the deviation of the assumed 
model from the true law. The ISRP and the constant estimated rate $\alpha$ are plotted 
in Fig.\ 6 for the 3 models, in order to rank such models by means of the ISRP metric.
We have the following estimated values of $\alpha$ in the two considered instances:
\begin{description}
\item{(i)}  $\alpha_N=0.0187$, \quad $\alpha_K =0.0222$, \quad $\alpha_G =0.0282$,
\item{(ii)}  $\alpha_N=22.5261$, \quad $\alpha_K =23.0986$, \quad $\alpha_G =5.3504$.
\end{description}
Specifically, the proposed model (\ref{new_sol}) has the smallest average deviation 
of ISRP from the estimated parameter. Indeed, by computing the $d_2$-distance  (in the 
euclidean norm) between ISRP and the constant parameter,  for the given datasets, one has 
\begin{description}
\item{(i)}  $d_2^N=0.0442$, \quad $d_2^K =0.0522$, \quad $d_2^G =0.0672$,
\item{(ii)}  $d_2^N=200.2610$, \quad$d_2^K =250.3044$, \quad $d_2^G=290.2033$.
\end{description}
It thus follows that the smallest distance is obtained for the proposed model (\ref{new_sol}). 
Hence, the actual dynamics of the two datasets is better described by model (\ref{new_sol}) 
rather than the other two models.
\section{Analysis of a special inhomogeneous linear birth-death process}\label{sect:linearBD}
Deterministic curves are often employed to describe population growth due to their ease of  
tractability. However they can be reasonably used just to describe overall dynamics. A more 
realistic description can be performed by means of stochastic models. Indeed, suitable random 
processes allow to take into account the ubiquitous environmental fluctuations, which are due 
to various factors that are unknown or not always quantifiable. As example, we recall the 
paper by Tan \cite{Tan}, that investigates a birth-death process whose mean is identical to 
the curve of the Gompertz growth. A general theory based on stochastic differential equations 
is proposed by Tan et al.\ \cite{Tan_2001} to construct models for carcinogenesis and to 
analyze related data. A non-homogeneous density-dependent birth-death process 
describing a stochastic logistic growth has been investigated by 
Tan and Piantadosi \cite{Tan_Piantadosi}. More recently, aiming to describe 
the effect of radiation therapy, Tuckwell \cite{Tuckwell} considered the differential equation 
of Gompertzian growth modified by inclusion of a stochastic term involving the Poisson process. 
\par
In order to propose a stochastic counterpart of the growth model introduced in (\ref{new_sol}),  
in this section we shall investigate an evolutionary model based on a birth-death process. 
Since a population described by the curve (\ref{new_sol}) can reach a great size when 
$\alpha$ is large, we are driven to consider a stochastic process having infinite state-space. 
Specifically, 
we assume that the number of individuals of a population is described by an inhomogeneous 
linear birth-death process $\left\{X(t);\,t \geq 0\right\}$ having state space $\mathbb{N}_0$, 
with 0 absorbing endpoint. The birth and death rates are specified by 
\begin{equation}
\begin{array}{ll}
 P[X(t+h)=j+1\,|\,X(t)=j] = j \lambda(t) h +o(h), & \qquad j\in \mathbb{N}_0 \\[1mm]
 P[X(t+h)=j-1\,|\,X(t)=j] = j \mu(t) h+o(h), & \qquad j \in \mathbb{N}, 
\end{array}
\label{def_proc_prob}
\end{equation}
where $\lambda(t)$ and $\mu(t)$ are positive functions, integrable on $(0,t)$ for any finite $t> 0$, and $h>0$. 
Since the rates (\ref{def_proc_prob}) are linear in $j$, new births and deaths occur proportionally to the 
population size, and according to the time-dependent rates $\lambda(t)$ and $\mu(t)$, which constitute 
the individual birth rate and death rate at time $t$, respectively. 
We recall that the model specified in (\ref{def_proc_prob}) is a time-inhomogeneous 
version of the classical linear birth-death process (see, for instance, Bailey \cite{Bailey});  
examples of applications of such a model in genetics and phylogenetics
have been reviewed in Novozhilov {\em et al.}\ \cite{NoKaKo2006}. 
\par
For all $t\geq 0$, and $x,y\in \mathbb{N}_0$ we denote the transition probability of $X(t)$ by 
$P_{y,x}(t)=P[X(t)=x\,|\,X(0)=y]$. 
Clearly, $P_{y,x}(t)$ represents the probability that the population size  equals level $x$ at time $t$, 
conditional on the initial size $y$. 
Formally, the initial state $y$ may be 0, however this choice is trivial since 0 is an absorbing endpoint 
for the birth-death process. Henceforth we thus assume $X(0)=y\in \mathbb{N}$, in agreement with 
the fact that the initial state of the growth model (\ref{new_sol}) is positive. 
\par
Aiming to determine the transition probability, 
for $0< z< 1$ and $t\geq 0$, let $G(z,t)=\sum_{x=0}^{\infty}P_{y,x}(t)z^x$ 
be the probability generating function of $X(t)$, with initial condition $G(z,0)=z^y$. 
As showed in Tan \cite{Tan}, it results
\begin{equation}\label{prob_gen_fun}
G(z,t)=\left\{1-(z-1)[(z-1)\phi(t)-\psi(t)]^{-1}\right\}^y,
\end{equation}
where 
\begin{equation}\label{fun_phi_psi}
 \psi(t)=\exp{\left\{-\int_0^t [\lambda(\tau)-\mu(\tau)]d\tau\right\}}, 
 \qquad 
 \phi(t)=\int_0^t \lambda(\tau) \psi(\tau) d\tau.
\end{equation}
For brevity, in the following we denote the conditional mean and the conditional variance 
of $X(t)$, given $X(0)=y$, by 
$$
 E_y(t)=E[X(t)|X(0)=y]\quad \hbox{and} \quad V_y(t)=Var[X(t)|X(0)=y],
$$ 
respectively. Clearly, $E_y(t)$ and $V_y(t)$ describe the mean trend and the related 
variability in the stochastic growth model.  From assumptions (\ref{def_proc_prob}) we obtain 
that the population mean satisfies the following differential equation: 
\begin{equation}
 \frac{d E_y(t)}{dt}=\xi(t) E_y(t),\qquad t>0,
\label{eq:eqdiffEyt}
\end{equation}
where  the time-dependent growth rate $\xi(t)$ 
is the net growth rate {\em per capita} of individuals, i.e.
\begin{equation}
 \xi(t)=\lambda(t)- \mu(t), \qquad t\geq 0.
\label{eq:xidifflambdamu}
\end{equation}
The behavior of $X(t)$ may mimic the growth curve proposed in (\ref{new_sol}). Indeed, 
the analogies between the growth model (\ref{new_sol}) and the considered birth-death process 
stems from the fact that Eq.\ (\ref{eq:eqdiffEyt}) has the same form of (\ref{eq:nt}). 
Hence, the growth model (\ref{new_sol}) and the mean of the birth-death process $X(t)$ are 
governed by the same equation in the special case when the assumptions (\ref{eq:xidit}) 
and (\ref{eq:xidifflambdamu}) hold. This case will be treated in Section \ref{sect:spcase}. 
\par
Let us now obtain the transition probabilities, the conditional mean and the 
conditional variance of $X(t)$.
\begin{proposition}\label{prop:1}
Let $y\in \mathbb{N}$. 
The transition probabilities of the process $X(t)$, with rates (\ref{def_proc_prob}), are given by:
\begin{eqnarray}\label{prob_j}
 &&  P_{y,0}(t)=\left(1-\frac{1}{\psi+\phi}\right)^y, \\
 &&  P_{y,x}(t)=\left(\frac{\phi}{\psi+\phi}\right)^x \,\sum_{i=0}^{m} {y \choose u}{y+x-i-1 \choose y-1}
 (\phi^{-1}-1)^{i}\left(1-\frac{1}{\psi+\phi}\right)^{y-i}\nonumber \\
&& \hspace{10cm} x\in\mathbb{N},\nonumber
\end{eqnarray}
with $m=\min\left\{y,x\right\}$, where $\psi=\psi(t)$ and $\phi=\phi(t)$ are shown 
in (\ref{fun_phi_psi}). Moreover, the conditional mean and the conditional variance of $X(t)$ are
\begin{equation}\label{cond_mean_var}
E_y(t)=\frac{y}{\psi(t)},
\qquad 
V_y(t)=y\,\frac{[\psi(t)+2 \phi(t)-1]}{\psi^2(t)},
\qquad t\geq 0,
\end{equation}
respectively.  
\end{proposition}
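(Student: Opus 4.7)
The plan is to start from the probability generating function given in (\ref{prob_gen_fun}) and extract the coefficients of $z^x$ by a power series expansion. First I would simplify the expression in braces by combining the fraction into a single ratio, obtaining
\[
G(z,t)=\left[\frac{(\psi+\phi-1)+(1-\phi)\,z}{(\psi+\phi)-\phi\,z}\right]^{y},
\]
where $\psi=\psi(t)$ and $\phi=\phi(t)$. Setting $z=0$ immediately gives $P_{y,0}(t)=\left[(\psi+\phi-1)/(\psi+\phi)\right]^{y}=\left[1-1/(\psi+\phi)\right]^{y}$, matching the first line of (\ref{prob_j}).

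Next, I would write $G(z,t)$ as the product of two factors and expand each separately. The numerator, being a polynomial of degree one in $z$ raised to the integer power $y$, expands by the finite binomial theorem as
\[
\sum_{i=0}^{y}\binom{y}{i}(\psi+\phi-1)^{y-i}(1-\phi)^{i}\,z^{i},
\]
while the denominator is rewritten as $(\psi+\phi)^{-y}\bigl[1-\tfrac{\phi}{\psi+\phi}z\bigr]^{-y}$ and expanded by the generalized (Newton) binomial series as $(\psi+\phi)^{-y}\sum_{k\ge 0}\binom{y+k-1}{y-1}\bigl(\tfrac{\phi}{\psi+\phi}\bigr)^{k}z^{k}$. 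Multiplying the two series and collecting the coefficient of $z^x$ yields a finite sum in $i$ running from $0$ to $\min\{y,x\}$ (the upper bound coming from $\binom{y}{i}=0$ for $i>y$ and $k=x-i\ge 0$). The remaining task is purely algebraic: rearrange powers so as to pull the common factor $\bigl(\tfrac{\phi}{\psi+\phi}\bigr)^{x}$ out of the sum, using the identity $(1-\phi)^{i}(\psi+\phi)^{-i}=\bigl(\tfrac{\phi}{\psi+\phi}\bigr)^{i}(\phi^{-1}-1)^{i}$ and writing $(\psi+\phi-1)^{y-i}(\psi+\phi)^{-(y-i)}=\bigl[1-1/(\psi+\phi)\bigr]^{y-i}$. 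This recovers exactly the second line of (\ref{prob_j}).

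For the moments I would differentiate $G(z,t)$ directly at $z=1$ rather than manipulate the series. A short computation gives $\partial_z\bigl[(z-1)/((z-1)\phi-\psi)\bigr]\big|_{z=1}=-1/\psi$, so that $E_{y}(t)=\partial_z G(z,t)\big|_{z=1}=y/\psi(t)$. The same argument can be carried out for $X(t)$ without the PGF by noticing that (\ref{def_proc_prob}) leads to (\ref{eq:eqdiffEyt}) with $\xi=\lambda-\mu$, whose solution with $E_y(0)=y$ is precisely $y/\psi(t)$. The factorial second moment $E_y[X(X-1)]=\partial_z^{2}G(z,t)|_{z=1}$ is obtained by one further differentiation; collecting terms yields $E_y[X(X-1)]=y(y-1)/\psi^{2}+2y\phi/\psi^{2}$, and the variance formula in (\ref{cond_mean_var}) follows from $V_y(t)=E_y[X(X-1)]+E_y(t)-E_y(t)^{2}$ after elementary simplification.

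The only real obstacle is the bookkeeping in the second paragraph: one must keep track of four different types of factors ($\psi+\phi$, $\psi+\phi-1$, $\phi$, $1-\phi$), convert them into the pre-announced form, and ensure that the binomial coefficient $\binom{y+x-i-1}{x-i}$ coming from the negative-binomial expansion is rewritten as $\binom{y+x-i-1}{y-1}$ as in the statement. Everything else — the vanishing of the summands for $i>\min\{y,x\}$, the evaluation of $G$ and its derivatives at $z=1$, and the variance simplification — is routine.
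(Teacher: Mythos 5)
Your proposal is correct and follows essentially the same route as the paper: the single-ratio form $\bigl[((\psi+\phi-1)+(1-\phi)z)/((\psi+\phi)-\phi z)\bigr]^{y}$ is exactly the paper's factorization of $G(z,t)$ into a negative-binomial factor $\bigl(1-z\,\tfrac{\phi}{\phi+\psi}\bigr)^{-y}$ times a finite-binomial factor, and the convolution of the two series with the truncation at $\min\{y,x\}$ is the paper's argument verbatim. Your explicit differentiation of $G$ at $z=1$ for the mean and variance just spells out what the paper dismisses as ``straightforward,'' and your computations check out (note only that the ${y\choose u}$ in the stated formula is a typo for ${y\choose i}$, which your expansion correctly produces).
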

\begin{proof}
The probability generating function given in (\ref{prob_gen_fun}) can be rewritten as 
the product of two generating functions, i.e.
$$
G(z,t)=  \left( {1-z \; \frac{\phi}{\phi+\psi}} \right)^{-y}  
 \left( {z\; \frac{1-\phi}{\phi+\psi}+1-\frac{1}{\phi+\psi}}\right)^y.
$$
Since $0<\frac{\phi}{\phi+\psi}<1$, for $0<z<1$ one has
$$
 \left(1-z\;\frac{\phi}{\phi+\psi}\right)^{-y}
 =\sum_{i=0}^\infty a_i z^i, 
 \qquad 
 \left(z\; \frac{1-\phi}{\phi+\psi}+1-\frac{1}{\phi+\psi}\right)^y
 =\sum_{i=0}^\infty b_i z^i, 
$$ 
where 
$$
 a_i={y+i-1 \choose y-1} \left(\frac{\phi}{\phi+\psi}\right)^i, 
 \qquad 
 i\in \mathbb{N}_0
$$
and 
$$ 
 b_i={y \choose i} \left(\frac{1-\phi}{\phi+\psi}\right)^i \left(1-\frac{1}{\phi+\psi}\right)^{y-i} 
 {\bf 1}_{\{0\leq i\leq y\}}.
$$ 
Therefore, we have $P_{y,x}(t)=\sum_{i=0}^x b_i a_{x-i}$ and the expressions (\ref{prob_j}) 
thus follow. Finally, the conditional mean and the conditional variance (\ref{cond_mean_var}) 
can be obtained from (\ref{prob_gen_fun}) straightforwardly. 
\end{proof}
\par
It is worth pointing out that the expressions (\ref{prob_j}) actually correct the transition 
probabilities given in Tan \cite{Tan}.  
\par
The monotonicity and the concavity of the conditional mean and variance of $X(t)$ can be 
analyzed in term of $\lambda(t)$ and $\mu(t)$ by noting that, due to (\ref{cond_mean_var}), 
for $t\geq 0$ we have   
$$
 E_y'(t)=y\frac{\xi(t)}{\psi(t)},
 \qquad 
 E_y''(t)=y\frac{\lambda'(t)-\mu'(t)+\xi^2(t)}{\psi(t)}, 
$$
$$
 V_y'(t)=\frac{y}{\psi^2(t)}\left\{\lambda(t)[3\psi(t)+4\phi(t)-2]-\mu(t)[\psi(t)+4\phi(t)-2]\right\},
$$
\begin{eqnarray}
 V''_y(t)&=& \frac{y}{\psi^2(t)} \left\{\lambda'(t)\left[3 \psi(t)+4\phi(t)-2\right]-\mu'(t)\left[\psi(t)+4\phi(t)-2\right]\right.
 \nonumber\\
&&+\lambda(t)\left[\xi(t)(3\psi(t)+8\phi(t)-4)+4\lambda(t)\psi(t)\right]
\nonumber\\
&&\left.-\mu(t)\left[\xi(t)(\psi(t)+8\phi(t)-4)+4\lambda(t)\psi(t)\right]\right\},
 \nonumber
\end{eqnarray}
where $\xi(t)$ is the net growth rate defined in (\ref{eq:xidifflambdamu}). It immediately follows that the 
population mean is increasing (decreasing) at time $t$ if $\xi(t)$ is positive (negative).  The concavity of the 
mean and the monotonicity and  concavity  of the variance can be studied through suitable computations.  
From the above expressions, some results on the conditional mean and  variance 
are provided in Table 4, where ${\cal D}$ means that the result strictly depends on the 
rates $\lambda(t)$ and $\mu(t)$, and where we have set
\begin{equation}\label{setting_tilde}
\tilde{\psi}=\lim_{t \rightarrow \infty} \psi(t),
\quad 
\tilde{\phi}=\lim_{t \rightarrow \infty} \phi(t),
\quad
\tilde{\lambda}=\lim_{t \rightarrow \infty} \int_0^t \lambda(\tau) d\tau,
\quad
\tilde{\mu}=\lim_{t \rightarrow \infty} \int_0^t \mu(\tau) d\tau.
\end{equation}
It is shown that the present model can exhibit very different behaviors, according to the values of the relevant 
parameters. For instance, the conditional mean  may increase toward $\infty$ and decrease to a constant 
(possibly vanishing) asymptotic value. 
%
\begin{table}[t]
\begin{center}
\begin{tabular}{lllll}
\hline
\small
$\!\!\!\!\xi(t)$ &   $E_y(t)$ &  
$\displaystyle\lim_{t \rightarrow \infty} E_y(t)$ &   
$V_y(t)$ &  $\displaystyle\lim_{t \rightarrow \infty} V_y(t)$ 
\\
\hline
$\!\!\!\!>0\;\;\forall t>0$ &  
$\begin{array}{l}
  \hbox{strictly}    \\
  \hbox{increas.}
\end{array}$
& %
$
\left\{
\begin{array}{ll}
y/\tilde{\psi}, & \hbox{if }\tilde{\lambda}-\tilde{\mu}<\infty \\
\infty, & \hbox{if }\tilde{\lambda}-\tilde{\mu}=\infty
\end{array} 
\right.
$
                     & 
$\begin{array}{l}
  \hbox{strictly}    \\
  \hbox{increas.}
\end{array}$      &   $
\left\{
\begin{array}{l}
y(\tilde{\psi}+2 \tilde{\phi}-1)/\tilde{\psi}^2,   \\
\qquad\quad \hbox{if }\tilde{\lambda}<\infty, \tilde{\mu}<\infty   \\
\infty, \qquad  \hbox{if }\tilde{\lambda}=\infty   
\end{array} 
\right.
$
\\
        &      &      &   &     \\
\hline
$\!\!\!\!=0\;\;\forall t>0$ &   constant &  $y$  &  
$\begin{array}{l}
  \hbox{strictly}    \\
  \hbox{increas.}
\end{array}$       
& $
\left\{
\begin{array}{ll}
2y\tilde{\lambda}, & \hbox{if }\tilde{\lambda}<\infty\\
\infty, & \hbox{if }\tilde{\lambda}=\infty
\end{array} 
\right.
$ \\
         &            &     &    &     \\
\hline
$\!\!\!\!<0\;\;\forall t>0$  &  
$\begin{array}{l}
  \hbox{strictly}    \\
  \hbox{decreas.}
\end{array}$   
& $
\left\{
\begin{array}{ll}
y/\tilde{\psi}, & \hbox{if }\tilde{\mu}-\tilde{\lambda}<\infty\\  
0, & \hbox{if }\tilde{\mu}-\tilde{\lambda}=\infty\\
\end{array} 
\right.
$   &   ${\cal D}$  &  $
\left\{
\begin{array}{l}
y(\tilde{\psi}+2 \tilde{\phi}-1)/\tilde{\psi}^2, \\
\qquad \quad \hbox{if }\tilde{\lambda}<\infty, \tilde{\mu}<\infty \\  
{\cal D},  \qquad \hbox{otherwise}  
\end{array} 
\right.
$            \\
        &   &    &     &\\
\hline
\end{tabular}
\end{center}
\caption{Some results on the conditional mean and  variance of $X(t)$.}
\end{table} 
\par
We finally remark that, since 0 is an absorbing endpoint, $P_{y,0}(t)$ represents 
the probability that the population reaches extinction prior to time $t$. 
Due to (\ref{prob_j}), the probability of ultimate extinction, for initial state $y$, is given by 
\begin{equation}\label{limPt}
 \pi_{y,0}:=\lim_{t\to \infty} P_{y,0}(t)
 =\left(1-\frac{1}{\tilde\psi+\tilde\phi}\right)^y, 
 \qquad y\in \mathbb{N}.
\end{equation}
Clearly,  if $\tilde{\mu}-\tilde{\lambda}=\infty$ or $\tilde{\phi}=\infty$, then $\pi_{y,0}=1$, 
this leading to certain ultimate extinction. 
\subsection{Analysis of a special case}\label{sect:spcase}
We already pointed out that the growth model (\ref{new_sol}) and the 
conditional mean of the birth-death process $X(t)$ satisfy the same linear differential 
equation, when both Eqs.\ (\ref{eq:xidit}) and (\ref{eq:xidifflambdamu}) hold. 
The aim of this section is to investigate in detail such a special case. 
\begin{proposition}\label{propEl}
The linear birth-death process with rates specified in (\ref{def_proc_prob}) has 
conditional mean 
\begin{equation}\label{cond_exp}
 E_y(t)= y \exp{\left\{\frac{\alpha}{\beta}\left[1-(1+t)^{-\beta}\right]\right\}}, 
 \qquad t\geq 0
\end{equation}
if and only if 
\begin{equation}\label{rel_lam_mu}
\lambda(t)-\mu(t)=\alpha (1+t)^{-(\beta+1)}, \qquad t\geq 0.
\end{equation}
\end{proposition}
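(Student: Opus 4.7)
The plan is to prove both implications by exploiting the ODE satisfied by $E_y(t)$, which was already derived from the infinitesimal transition rates. Recall from (\ref{eq:eqdiffEyt}) and (\ref{eq:xidifflambdamu}) that
$$
 \frac{d E_y(t)}{dt}=[\lambda(t)-\mu(t)]\,E_y(t),\qquad t>0,
$$
with initial condition $E_y(0)=y$ (equivalently, one can work from the closed form $E_y(t)=y/\psi(t)$ in (\ref{cond_mean_var}) together with the definition (\ref{fun_phi_psi}) of $\psi(t)$). This linear first-order ODE has the unique solution
$$
 E_y(t)=y\,\exp\!\left\{\int_0^t [\lambda(\tau)-\mu(\tau)]\,d\tau\right\}.
$$

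For the \textbf{``if''} direction, assume (\ref{rel_lam_mu}) and substitute to obtain
$$
 \int_0^t \alpha(1+\tau)^{-(\beta+1)}\,d\tau=\frac{\alpha}{\beta}\left[1-(1+t)^{-\beta}\right],
$$
which yields (\ref{cond_exp}) at once. For the \textbf{``only if''} direction, assume (\ref{cond_exp}) and differentiate logarithmically:
$$
 \frac{1}{E_y(t)}\frac{dE_y(t)}{dt}=\alpha(1+t)^{-(\beta+1)}.
$$
Comparing this with the ODE above forces $\lambda(t)-\mu(t)=\alpha(1+t)^{-(\beta+1)}$, giving (\ref{rel_lam_mu}).

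There is really no obstacle here; the whole argument is a direct elementary integration/differentiation, and the only thing to be careful about is invoking uniqueness of the solution to the linear ODE (guaranteed by the integrability of $\lambda$ and $\mu$ on bounded intervals, already assumed in the definition of the process after (\ref{def_proc_prob})). Consequently the proof can be organized in a handful of lines, essentially comprising the two calculations above followed by the uniqueness remark.
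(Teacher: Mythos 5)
Your proof is correct and follows essentially the same route as the paper: both reduce the claim to the closed form $E_y(t)=y\exp\{\int_0^t[\lambda(\tau)-\mu(\tau)]\,d\tau\}$ (equivalently $E_y(t)=y/\psi(t)$) and then observe that (\ref{cond_exp}) holds if and only if the integral equals $\frac{\alpha}{\beta}[1-(1+t)^{-\beta}]$, which is equivalent to (\ref{rel_lam_mu}) by differentiation/integration. Your added remark about uniqueness of the ODE solution is a harmless elaboration of the same argument.
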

\begin{proof}
From the first of (\ref{fun_phi_psi}) and (\ref{cond_mean_var}) we have 
$$
 E_y(t)=y \exp{\left\{\int_0^t [\lambda(\tau)-\mu(\tau)]d\tau\right\}}, \qquad t\geq 0.
$$ 
Hence, the expression (\ref{cond_exp}) holds if and only if 
\begin{equation}
 \int_0^t [\lambda(\tau)-\mu(\tau)]d\tau=\frac{\alpha}{\beta}\left[1-(1+t)^{-\beta}\right],
 \qquad t\geq 0,
 \label{rel_int_lam_mu}
\end{equation}
that is equivalent to (\ref{rel_lam_mu}). 
\end{proof}
\par
Let us now investigate the process $X(t)$ under the assumption (\ref{rel_lam_mu}). In this case, 
the birth rate is larger than the death rate, and the net growth rate defined in 
(\ref{eq:xidifflambdamu}) is decreasing and tends to zero according to a power law. 
Hence, this assumption leads to results useful to describe populations that experience a tendential high growth 
for initial time, which however decreases in time. Furthermore, by virtue of 
(\ref{fun_phi_psi}), the mean $E_y(t)$ is identical to the curve $N(t)$ given in 
(\ref{new_sol}). Hence, due to (\ref{rel_lam_mu}), from (\ref{rel_int_lam_mu}) one has 
$\tilde{\lambda}-\tilde{\mu}={\alpha}/{\beta}$, with $\tilde{\lambda}$ and $\tilde{\mu}$ 
defined in (\ref{setting_tilde}). From the results shown in Table 4, $E_y(t)$ is 
strictly increasing, and clearly it tends to the carrying capacity, i.e.\ 
$\displaystyle\lim_{t\to \infty} E_y(t)=y e^{\alpha/\beta} \equiv C$. Moreover, the function $\psi(t)$ 
is given by
\begin{equation}\label{psi}
\psi(t)=\exp{\left\{-\frac{\alpha}{\beta}\left[1-\left(1+t\right)^{-\beta}\right]\right\}}, 
\qquad t\geq 0,
\end{equation}
with $\tilde\psi=e^{-\alpha/\beta}$, so that the variance $V_y(t)$ is strictly increasing in 
$t$, according to the results specified in Table 4. 
\begin{example}\rm 
Let Eq.\ (\ref{rel_lam_mu}) be satisfied. 
We now consider various instances of $\mu(t)$ listed in Table 5. 
The corresponding transition probabilities $P_{y,x}(t)$, given in (\ref{prob_j}), are plotted in 
Fig.\ 7 for $y=1$, $x=0$ and $x=1$, and for some choices of the parameters. The asymptotic 
absorption probability (\ref{limPt}) is provided in the last column of Table 5. Note that in the first 
three cases of Table 5 we have $\tilde \mu=+\infty$, and thus $\tilde \psi=+\infty$, this implying 
that $\pi_{y,0}=1$, i.e.\ the ultimate extinction is certain when the individual death rate $\mu(t)$ 
is constant, or is a ramp function, or is sinusoidal. On the contrary, 
in case (d) it is $\tilde \mu<+\infty$, and thus $\tilde \psi<+\infty$, so that $\pi_{y,0}<1$. 
Specifically, if the individual birth and death rates are proportional and follow a decreasing power-law rule, 
then the ultimate extinction is not certain and is decreasing in $y$ (the initial population size). 
%
\begin{table}[t]
\begin{center}
\begin{tabular}{clc}
\hline
   case        &   $\mu(t)$ &  $\pi_{y,0}$\\[1mm]
\hline
(a)       &       $A$  &  $1$                                          \\[1mm]
\hline
(b) 	   &         $A+B \,(t-t_0){\bf 1}_{ \{t\geq t_0\} }$    &  $1$                         \\[1mm]   
\hline
(c)        &    $A+B\, \sin\left(\frac{2\pi }{Q}t\right)$      &  $1$             \\[1mm]
\hline
(d)    &      $A \,(1+t)^{-(\beta+1)}$    &  $\left[\frac{\frac{A}{\alpha}
\left(1-e^{-\frac{\alpha}{\beta}}\right)}{\frac{A}{\alpha}\left(1-e^{-\frac{\alpha}{\beta}}\right)+1}\right]^y$              \\
\hline
\end{tabular}
\end{center}
\caption{Some choices of $\mu(t)$ and the corresponding extinction probabilities (\ref{limPt}). 
Parameters $A$, $B$,  $Q$ and $t_0$ are positive constants.}
\end{table} 
%
%
\begin{figure}[t]
\begin{center}
\epsfxsize=5.8cm
\epsfbox{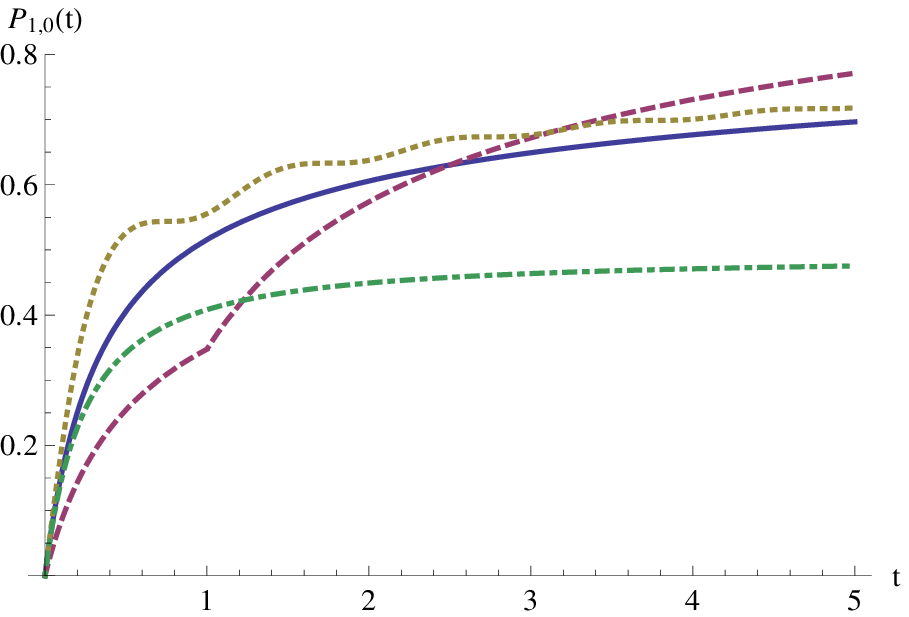}
\epsfxsize=5.8cm
\epsfbox{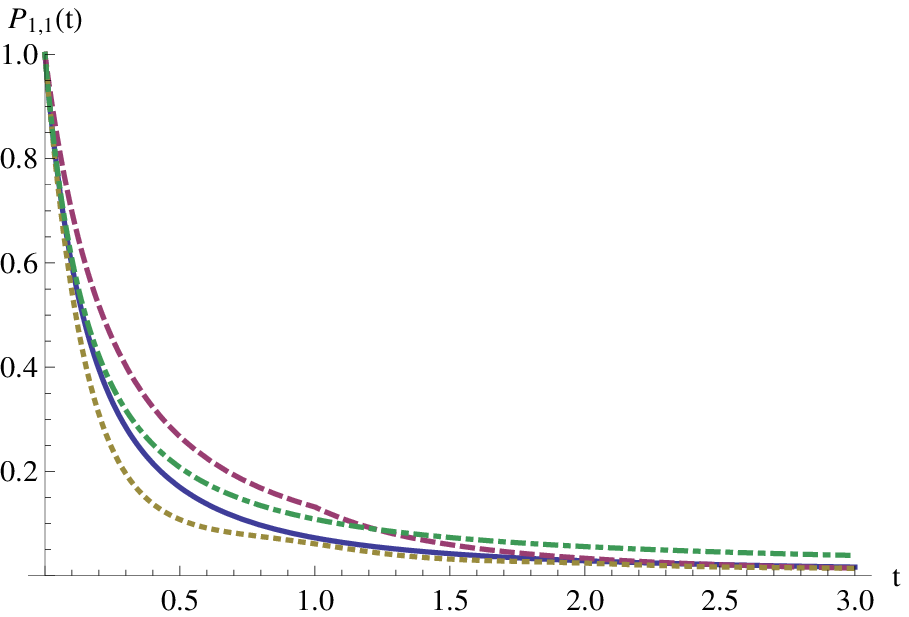}
\caption{Probabilities $P_{1,0}(t)$ and $P_{1,1}(t)$ for the cases specified in Table 5, with  
(a) $A=2.00001$ (full line), (b) $B=1$, $A=2.00001$,  $ t_0=1$ (dashed line), 
(c) $A=2.00001$, $B=2$, $Q=1$ (dotted line), (d) $A=2.00001$ (dot-dashed line).}
\end{center}
\end{figure}
%
\begin{figure}[t]
\begin{center}
\epsfxsize=5.8cm
\epsfbox{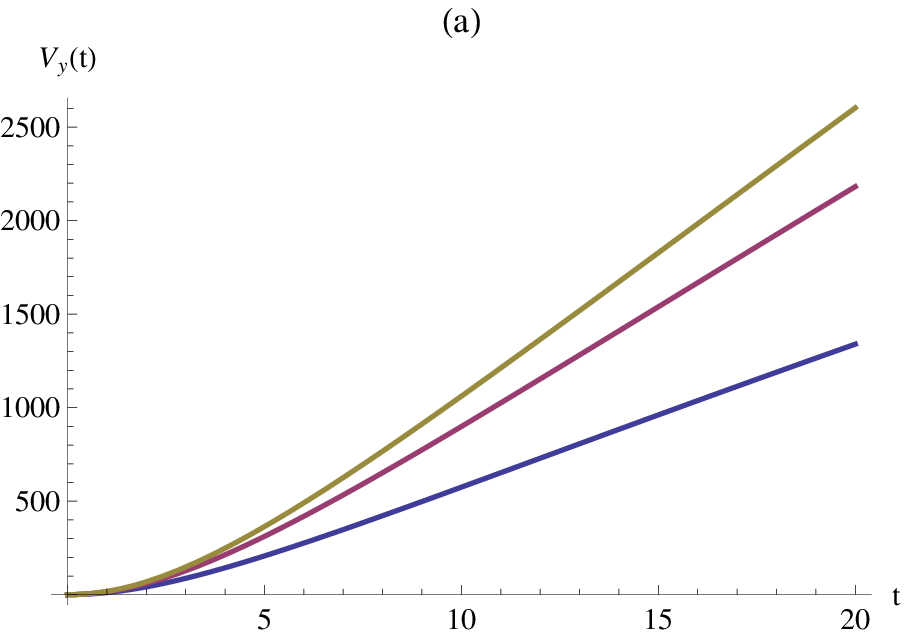}
\epsfxsize=5.8cm
\epsfbox{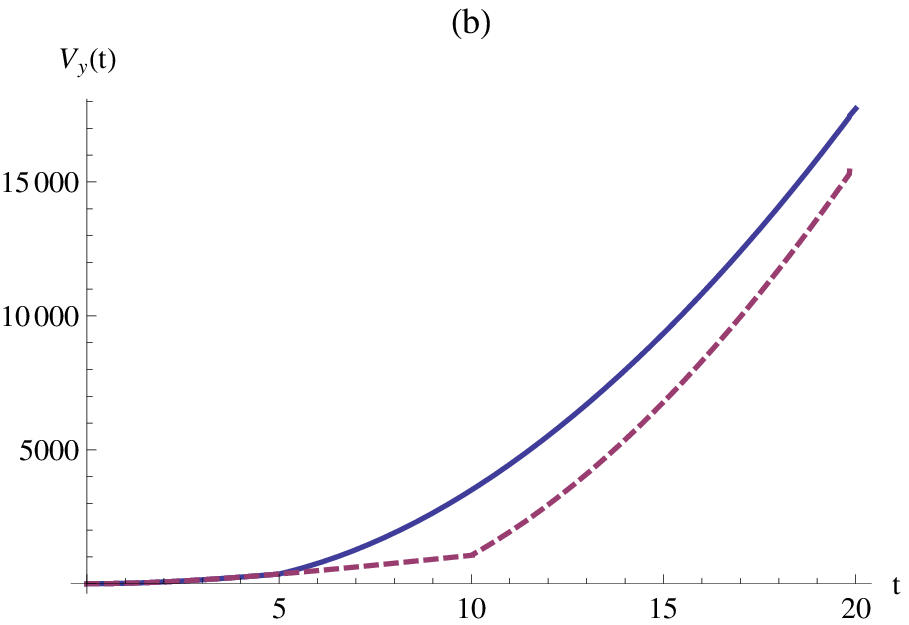}\\
\epsfxsize=5.8cm
\epsfbox{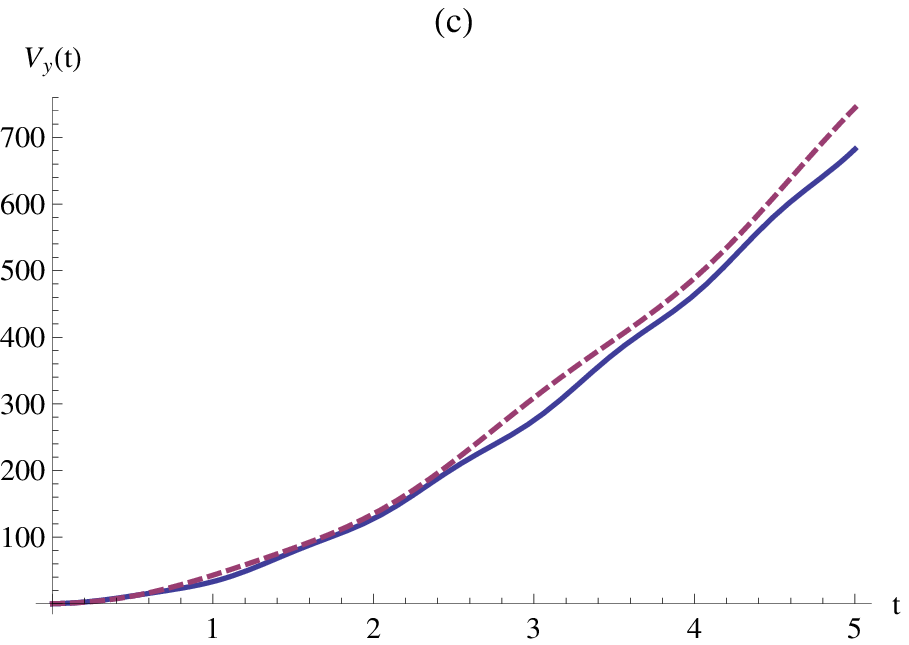}
\epsfxsize=5.8cm
\epsfbox{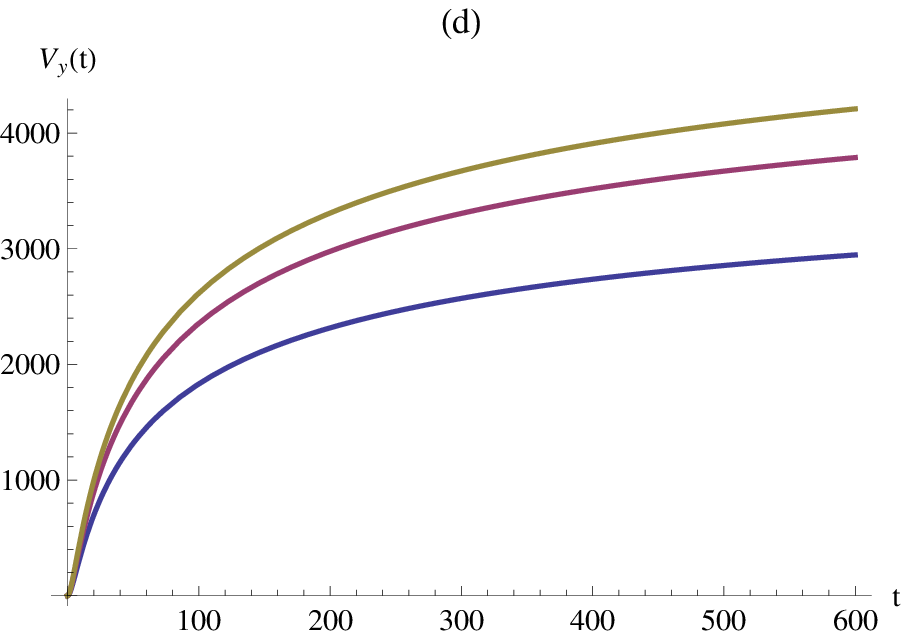}
\caption{For $y=1,\,\alpha=2,\, \beta=0.5$, the variance $V_y(t)$ is plotted for the cases of Table 5. 
In case (a): $A = 0.4,\, 0.8,\,1$ (from bottom to top). 
In case (b): $A=1$, $B=1$, $t_0=5$ (solid line) and $t_0=10$ (dashed line). 
In case (c): $A=2.00001$, $B=2$, $Q=1$ (solid line) and $Q=2$ (dashed line). 
In case (d): $A=0.4,\,0.8,\,1$ (from bottom to top), where the limits of $V_y(t)$ as $t\to \infty$ 
are $4096.9, 5267.45, 5852.72$, respectively.}
\end{center}
\end{figure}
\par
We now study the variance $V_y(t)$ when Eq.\ (\ref{rel_lam_mu}) holds, for the choices 
of $\mu(t)$  given in Table 5, with $\psi(t)$ specified in (\ref{psi}). 
\begin{itemize}
\item[(a)] 
Let us assume that $\mu(t) = A$, with $A>0$. This case refers to populations that experience 
constant individual death rate, regardless of age. Due to (\ref{cond_mean_var}) the variance is 
$$
V_y(t)=\frac{y}{\psi^2(t)} \left\{1-\psi(t)-\frac{2A}{\beta}
 \left(-\frac{\alpha}{\beta}\right)^{\frac{1}{\beta}}e^{-\frac{\alpha}{\beta}}\gamma \right\},
\qquad t\geq 0,
$$
with $\gamma=\Gamma\left(-\frac{1}{\beta},-\frac{\alpha}{\beta}\right)
-\Gamma\left(-\frac{1}{\beta},-\frac{\alpha}{\beta}(1+t)^{-\beta}\right)$, 
where $\Gamma(\cdot,\cdot)$ is the upper incomplete Gamma function. 
\item[(b)] We consider $\mu(t)=A+B\,(t-t_0) {\bf 1}_{ \{t\geq t_0\} }$, with $A>0$,  $B>0$ and $t_0>0$. 
In this case the individual death rate is constant until time $t_0$ and is linear increasing afterward, 
for instance due to worsening of the environmental or individual conditions. 
The variance has a rather cumbersome form, and thus it is omitted for brevity. 
\item[(c)] Let $\mu(t)=A+B \sin\left(\frac{2 \pi }{Q}t \right)$, where $Q>0$ and $A > |B|>0$. In this case 
the individual death rate is sinusoidal, thus describing populations that are subject to periodic 
increase and decrease of mortality, for instance due to seasonal predation or environmental variability. 
The variance  can be expressed in integral form. Again, we omit the result for brevity. 
\item[(d)] Let $\mu(t)= A (1+t)^{-(\beta+1)} $, with $A>0$, 
so that the individual birth and death rates are proportional and follow a decreasing power-law rule; 
from (\ref{cond_mean_var}) one obtains the variance    
$$
V_y(t)=\frac{y\left(1+2\frac{A}{\alpha}\right)
 \left(1-e^{-\frac{\alpha}{\beta}[1-(1+t)^{-\beta}]}\right)}{e^{-\frac{2\alpha}{\beta}[1-(1+t)^{-\beta}]}},
 \qquad t\geq 0.
$$
\end{itemize}
Note that in the first three cases the variance diverges as $t\to \infty$, whereas in case (d) 
it tends to a constant. Some plots of $V_y(t)$ are shown in Fig.\ 8, 
showing that the shape of the variance depends on $\mu(t)$.
\end{example}	
\par 
We point out that even though the birth-death process considered in this section has 
conditional mean identical to the growth curve (\ref{new_sol}), 
the behavior of the sample paths of $X(t)$ may be significantly different from such growth 
curve, since they can be absorbed at zero (recall Eqs.\ (\ref{prob_j}) and (\ref{limPt})). 
Although this characteristic allows $X(t)$ to be more realistic for some biological applications, 
in order to make the stochastic process closer to the growth curve, hereafter we remove the possibility 
of downward jumps. This leads to a birth process, that is clearly more appropriate to describe pure 
growth phenomena since its sample paths are non-decreasing. 
\section{Analysis of a special time-inhomogeneous linear birth process} \label{sect:simpleB}
In this section we investigate a time-inhomogeneous birth process, which is a special 
case of the process studied in Section \ref{sect:spcase}, whose mean is identical to the growth 
curve proposed in Eq.\ (\ref{new_sol}). 
\par
Assume that $\mu(t)\equiv 0$ in Eq.\ (\ref{def_proc_prob}), so that the number of individuals of a 
population is now modeled by an inhomogeneous linear birth process $\left\{X(t);\,t \geq 0\right\}$. 
Let ${\cal S}=\{y,y+1,\ldots\}$ be the state space of $X(t)$, where $y\in \mathbb{N}$ is the initial state, and let  
\begin{eqnarray}\label{prob_j_simple}
 && P[X(t+h)=j+1|X(t)=j]=j \lambda(t) h+o(h), \qquad j\in {\cal S}\\
 && P[X(t+h)=j|X(t)=j]=1-j \lambda(t)h+o(h), \qquad j\in {\cal S}, 
 \nonumber
\end{eqnarray}
where $\lambda(t)$ is a continuous positive function, integrable on $(0,t)$ for any finite $t>0$. 
Again, let $P_{y,x}(t)=P[X(t)=x\,|\,X(0)=y]$ be the transition probabilities of $X(t)$. It is well 
known (see \cite{Ricciardi}, for instance) that for $y\in \mathbb{N}$ and $x\in {\cal S}$ we have 
\begin{equation}\label{trans_prob_simple}
P_{y,x}(t)={x-1\choose y-1} e^{-y \Lambda(t)}\left(1-e^{-\Lambda(t)}\right)^{x-y},\qquad t\geq 0,
\end{equation}
where 
\begin{equation}\label{integral}
 \Lambda(t)=\int_0^t \lambda(s)ds, 
 \qquad t\geq 0. 
\end{equation}
In agreement with Proposition \ref{propEl}, the following result holds. 
\begin{proposition}
The linear birth process with rates specified in (\ref{prob_j_simple}) has 
conditional mean (\ref{cond_exp}) if and only if 
\begin{equation}\label{birth_rate}
\lambda(t)=\alpha (1+t)^{-(\beta+1)}, \qquad t\geq 0.
\end{equation}
\end{proposition}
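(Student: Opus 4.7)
The plan is to reduce the statement to Proposition \ref{propEl} by specializing to the case $\mu(t)\equiv 0$. The process governed by (\ref{prob_j_simple}) is exactly the birth-death process of Section \ref{sect:linearBD} with zero death rate, so all formulas derived there apply. In particular, condition (\ref{rel_lam_mu}) collapses to $\lambda(t)=\alpha(1+t)^{-(\beta+1)}$, which is (\ref{birth_rate}), while the conclusion (\ref{cond_exp}) is unchanged. Thus the ``if and only if'' follows immediately from the earlier proposition.

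For a self-contained argument, I would instead start from the explicit transition probabilities (\ref{trans_prob_simple}). For each fixed $t$, $P_{y,x}(t)$ is the mass function of a negative binomial distribution with parameters $y$ and success probability $e^{-\Lambda(t)}$ (translated so that the support begins at $y$). A direct summation, or equivalently differentiation of the probability generating function, gives
\begin{equation*}
E_y(t)=\sum_{x=y}^{\infty}x\,P_{y,x}(t)=y\,e^{\Lambda(t)}, \qquad t\geq 0,
\end{equation*}
with $\Lambda(t)$ as in (\ref{integral}). Equivalently, from (\ref{prob_j_simple}) one obtains the Kolmogorov equation $E_y'(t)=\lambda(t)E_y(t)$ with $E_y(0)=y$, whose unique solution is the same exponential.

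Once $E_y(t)=y\,e^{\Lambda(t)}$ is in hand, the equivalence is a one-line calculation. Setting this equal to the right-hand side of (\ref{cond_exp}), taking logarithms and differentiating in $t$ produces $\lambda(t)=\alpha(1+t)^{-(\beta+1)}$; conversely, integrating (\ref{birth_rate}) yields $\Lambda(t)=(\alpha/\beta)[1-(1+t)^{-\beta}]$, which when substituted into $y\,e^{\Lambda(t)}$ recovers (\ref{cond_exp}). I do not foresee any real obstacle: the only non-trivial ingredient is the mean formula for a time-inhomogeneous linear pure-birth process, which is classical and in any case inherited from Proposition \ref{propEl} via the specialization $\mu\equiv 0$.
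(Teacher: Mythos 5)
Your proposal is correct and matches the paper's approach: the paper offers no separate proof, simply noting that the result holds ``in agreement with Proposition \ref{propEl},'' i.e.\ by specializing that proposition to $\mu(t)\equiv 0$, which is exactly your primary argument. Your supplementary self-contained derivation via $E_y(t)=y\,e^{\Lambda(t)}$ from the negative binomial form of (\ref{trans_prob_simple}) is also sound, but not needed beyond what the paper does.
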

\par
Hence, when the expression of $\lambda(t)$ is specified as in (\ref{birth_rate}), 
we have $\Lambda(t)={\frac{\alpha}{\beta}\left[1-(1+t)^{-\beta}\right]}$, so that 
$\int_0^\infty \lambda(x)dx=\frac{\alpha}{\beta}$. Such model thus describes births under limited 
resources, with inputs that decrease in time and lead to a saturation level for the process.
Indeed, according to \cite{Ricciardi}, in this case both the conditional mean and variance 
are strictly increasing, with finite limits 
\begin{equation}\label{mean_simple}
E_y(t)=ye^{\frac{\alpha}{\beta}\left[1-(1+t)^{-\beta}\right]}
\quad \stackrel{t\rightarrow +\infty}{\rightarrow}\quad 
ye^{\frac{\alpha}{\beta}}\equiv C,
\end{equation}
\begin{equation}\label{variance_simple}
V_y(t)=y e^{\frac{\alpha}{\beta}\left[1-(1+t)^{-\beta}\right]}\left(e^{\frac{\alpha}{\beta}
\left[1-(1+t)^{-\beta}\right]}-1\right)
\quad \stackrel{t\rightarrow +\infty}{\rightarrow} \quad 
ye^{\frac{\alpha}{\beta}}\left(e^{\frac{\alpha}{\beta}}-1\right).
\end{equation}
%
%
Note that both $E_y(t)$ and $V_y(t)$ are decreasing in $\beta>0$. 
This is confirmed by  Fig.\ 9, where 
the mean and variance of $X(t)$ are plotted for various choices of $\beta$. 
For this special model, differently from the previous cases, we can determine a closed form 
for the {\em index of dispersion}, also known as {\em Fano factor}, defined as the variance 
over mean. Indeed, from Eqs.\  (\ref{mean_simple}) and (\ref{variance_simple}) we have 
\begin{equation}\label{eq:Dyt}
 D_y(t):=\frac{V_y(t)}{E_y(t)}=
 e^{\frac{\alpha}{\beta}\left[1-(1+t)^{-\beta}\right]}-1
 \quad \stackrel{t\rightarrow +\infty}{\rightarrow} \quad 
 e^{\frac{\alpha}{\beta}}-1,
\end{equation}
so that the index of dispersion  is monotonic increasing in $t$, with $D_y(0)=0$. 
From (\ref{eq:Dyt}) we have the following: 
\\
(i) \ If $\alpha<\beta \log 2$ then the birth process 
$X(t)$ is underdispersed, i.e.\ $D_y(t)<1$ for all $t\geq 0$. 
In this case the occurrence of events (births) is more regular than that of a Poisson process. 
\\
(ii) \ If $\alpha>\beta \log 2$ then $X(t)$ is  underdispersed for $t<t_*$, with 
$t_*:=\left(1-\frac{\beta}{\alpha}\ln 2\right)^{-1/\beta}-1$, and $X(t)$ is overdispersed for $t>t_*$, 
i.e.\ for large times there is more irregularity in the distribution of the number of occurrence 
of events (births) with respect to a Poisson process. 
\par
Moreover, from (\ref{mean_simple}) and (\ref{variance_simple}) we have that 
the {\em coefficient of variation} of $X(t)$ is given by
$$
\sigma_y(t)= \frac{\sqrt{V_y(t)}}{E_y(t)}
=y^{-1/2} \sqrt{1- e^{-\frac{\alpha}{\beta}\left[1-(1+t)^{-\beta}\right]}}, \qquad t\geq 0.
$$
Hence, it follows that $\sigma_y(t)$ is increasing in $t$ and in $\alpha$, whereas it is 
decreasing in $\beta$. The following limits thus hold: 
$$
\sigma_y(t) \quad \stackrel{t\rightarrow +\infty}{\rightarrow} \quad 
y^{-1/2} \sqrt{1-e^{-\frac{\alpha}{\beta}}},
$$
$$\lim_{\alpha \rightarrow 0} \sigma_y(t)=0,
\qquad \lim_{\alpha \rightarrow +\infty} \sigma_y(t)=y^{-1/2},$$
$$
 \lim_{\beta \rightarrow 0} \sigma_y(t)=y^{-1/2}\sqrt{1-(1+t)^{-\alpha}},
 \qquad \lim_{\beta \rightarrow+ \infty} \sigma_y(t)=0.
$$
Consequently, a better agreement between the deterministic growth model (\ref{new_sol}) and 
the birth process with rates (\ref{prob_j_simple}) is attained when $\alpha \rightarrow 0$ or 
$\beta \rightarrow + \infty$. 
Note that such conditions both imply $\lambda(t) \to 0$, so that a good correspondence 
between the two models is expected for a low birth rate. 
\begin{figure}[t]
\begin{center}
\epsfxsize=5.8cm
\epsfbox{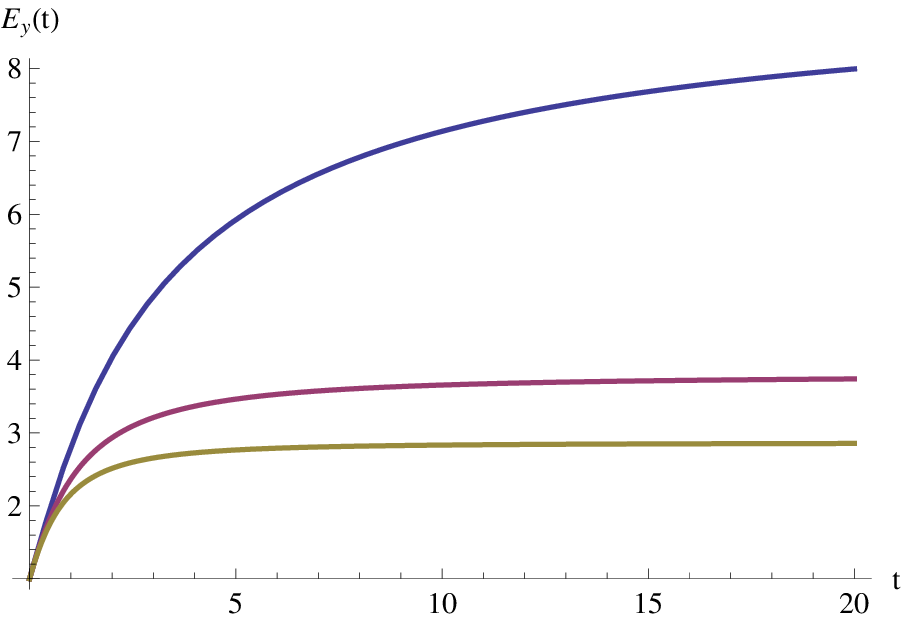}
\epsfxsize=5.8cm
\epsfbox{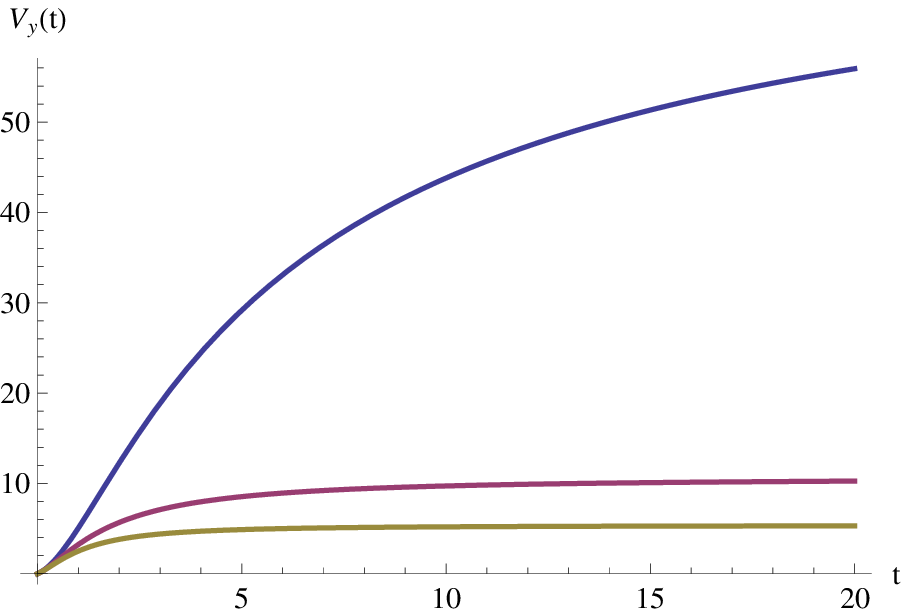}
\caption{The mean (\ref{mean_simple}) and the variance (\ref{variance_simple}) are 
plotted for $y = 1$, $\alpha = 2$, $\beta=0.9,\,1.5,\,1.9$ (from top to bottom). 
The limits (carrying capacity) for the mean are $C=9.23,\,3.79,\,2.87$, 
whereas for the variance the limits are $75.92,\,10.60,\,5.34$, respectively.}
\end{center}
\end{figure}
%
\par 
Let us now discuss a first-passage-time problem for $X(t)$, 
in analogy with the threshold crossing  problem analyzed in Section \ref{sec:Thresholdcrossing}.
The determination and analysis of  first-passage-time densities in biological modeling 
deserve large interest, since such functions provide essential information on the 
probability that some critical or beneficial levels are attained by the stochastic process under investigation. 
Given the initial state $y \in {\cal S}$ 
and a fixed threshold $k\in \mathbb{N}$, with $k>y$, we consider the first-passage time 
$$
T_{y,k}=\inf\{t\geq 0:\, X(t)=k\},  \qquad X(0)=y,
$$
and denote by $g_{y,k}(t)=dP(T_{y,k}\leq t)/dt$ its probability density function (pdf). 
Since the sample-paths of $X(t)$ are increasing over the state space $\cal S$, it is not 
hard to see that, for $k\in \mathbb{N}$, $k>y$, we have 
\begin{equation}\label{eq:fptpdf}
g_{y,k}(t)=(k-1)\lambda(t) P_{y,k-1}(t), \qquad t\geq 0, 
\end{equation}
where $\lambda(t)$ and $P_{y,k}(t)$ are given in (\ref{birth_rate}) and (\ref{trans_prob_simple}), 
respectively. Note that, since $\lambda(0)=\alpha$, the initial value of the first-passage-time pdf is 
$$
\lim_{t \rightarrow 0} g_{y,k}(t)=\left\{
\begin{array}{ll}
\alpha \,y, & \quad \hbox{if }k=y+1\\
0,& \quad \hbox{otherwise}.
\end{array} \right.
$$
In Fig.\ 10 the pdf of $T_{y,k}$ is plotted for various choices of $k$.   
We remark that such density is unimodal.
%
\begin{figure}[t]
\begin{center}
\epsfxsize=5.8cm
\epsfbox{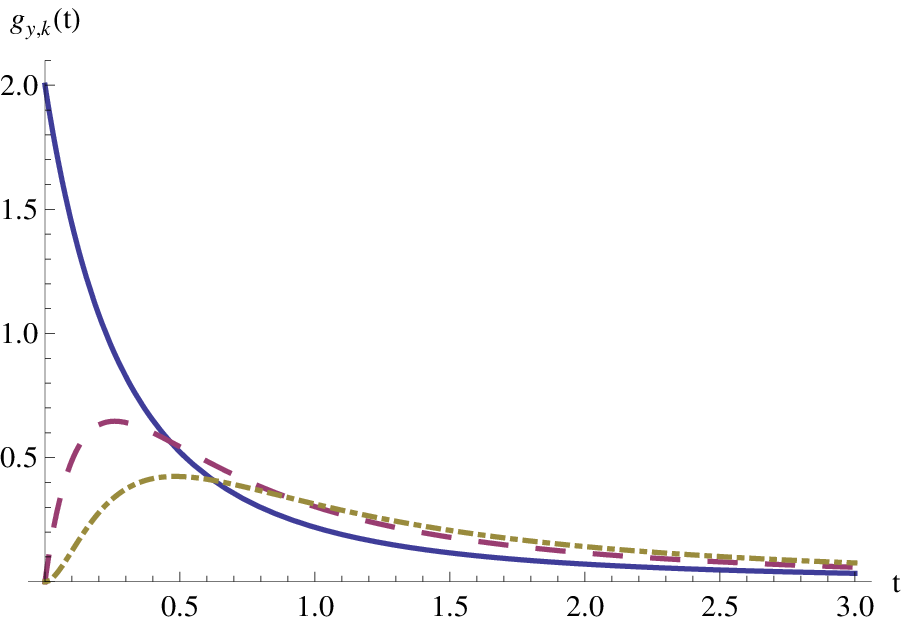}
\epsfxsize=5.8cm
\epsfbox{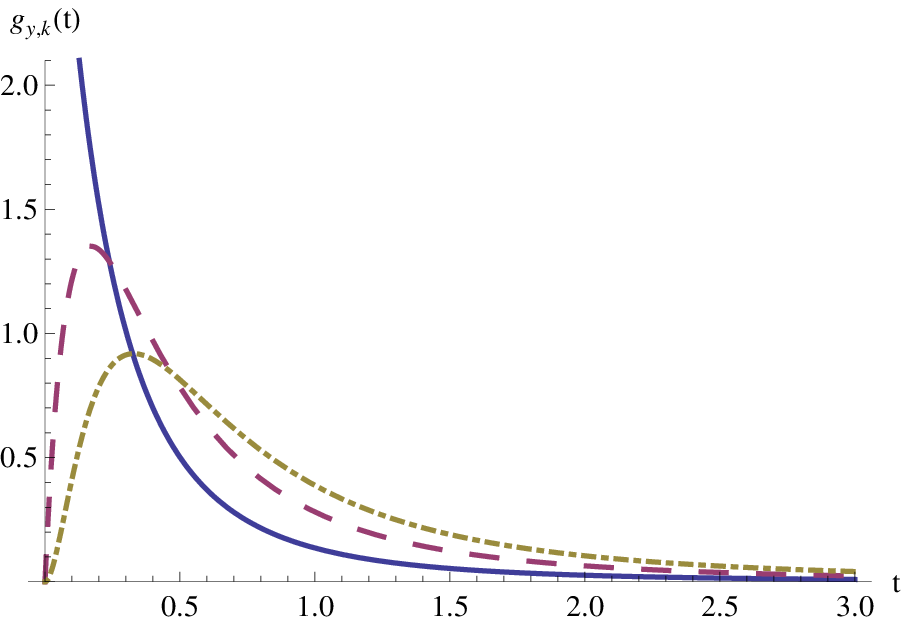}
\caption{
Density (\ref{eq:fptpdf}) for $\alpha=2,\, \beta=0.5$; on the left:  $k=2, \, 3, \, 4$ 
(solid, dashed, dot-dashed  line) and $y=1$; on the right:  $k=3,\,4,\,5$ 
(solid, dashed, dot-dashed  line) and $y=2$. 
}
\end{center}
\end{figure}
\begin{figure}[t]
\begin{center}
\epsfxsize=5.8cm
\epsfbox{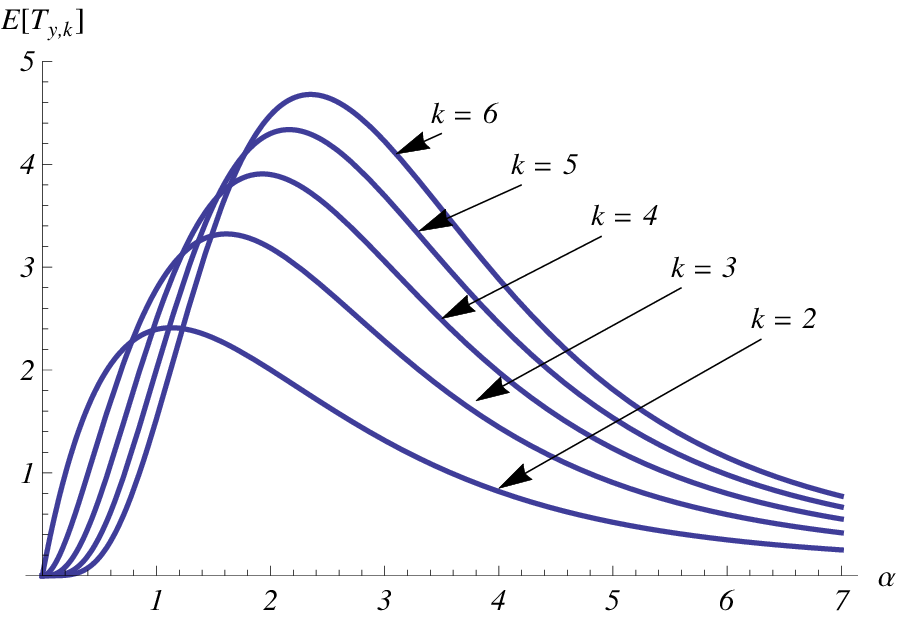}
\epsfxsize=5.8cm
\epsfbox{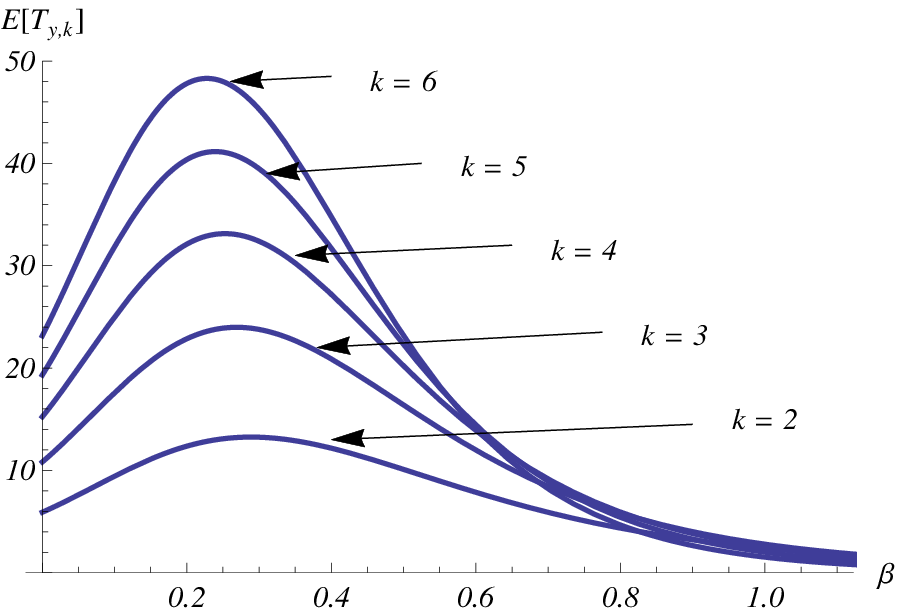}
\caption{For $y=1$, the mean of $T_{y,k}$ is plotted as function of $\alpha$ with $\beta=1$ 
on the left, and as function of $\beta$ with $\alpha=1$ on the right.}
\label{fig:11}
\end{center}
\end{figure}
%
In Fig.\ 11 the mean of $T_{y,k}$ is plotted as function of $\alpha$ and $\beta$, for various choices 
of $k$, evaluated by means of numerical integration. In both cases $E[T_{y,k}]$ is not 
monotonic, but it is increasing (decreasing) for small (large) values of $\alpha$ and $\beta$. 
\subsection*{\bf Concluding remarks} 
This paper has been devoted to the analysis of a deterministic growth model which is bounded by a 
carrying capacity, and behaves like the Gompertz law for small times, and the Korf law for large times. 
We investigated various properties of such a model, with attention to the main characteristics of 
interest in growth models, and performed various comparisons. 
Some examples of application of the proposed curve for the description of real 
growth phenomena have been shown, 
supported by an analysis of the performance based on suitable parameters. 
\par
In order to study a stochastic counterpart of the proposed model, we investigated a linear 
time-inhomogeneous birth-death process whose mean is identical to the proposed deterministic model. 
We obtained the transition probabilities, the moments and the population extinction of this process. 
However, even if the mean of such stochastic process is identical to the proposed growth curve, the 
considered birth-death model possesses an absorbing endpoint at zero and thus it may be not 
appropriate to describe a growth behavior. Hence, in order to overcome this drawback we finally  
studied a time-inhomogeneous birth process, whose mean again corresponds to the proposed 
growth curve, thus being more suitable to describe a pure growth phenomenon. 
The choices of the involved parameters that yield a better agreement between the deterministic 
growth model and the birth process have been identified though the analysis 
of the index of dispersion and the coefficient of variation. 

\subsection*{\bf Acknowledgements} 
This research was partially supported by the group GNCS of INdAM.  
The authors warmly thank the anonymous referees for various helpful comments. 


\end{document}